\usepackage{graphicx}
\usepackage{multicol}
%

\usepackage{amsmath, amsfonts,amssymb}
\usepackage{enumerate}
\usepackage{mdframed}
\usepackage[ruled, vlined]{algorithm2e}
\usepackage[dvipsnames]{xcolor}
\usepackage{calc}
\usepackage{wrapfig}

\newcommand{\thlist}{\T_{\mathrm{List}}}
\newcommand{\thint}{\T_{\mathrm{Int}}}
\newcommand{\thbvfour}{\T_\mathrm{BV4}}
\newcommand{\thintbvfour}{\T_{\mathrm{IntBV4}}}
\newcommand{\thlistintbvfour}{\T_{\mathrm{ListIntBV4}}}

\newcommand{\siglists}{\Sigma_{\mathrm{List}}}
\newcommand{\sigint}{\Sigma_{\mathrm{Int}}}
\newcommand{\sigbvfour}{\Sigma_{\mathrm{BV4}}}

\newcommand{\sort}[1]{\mathsf{#1}}
\newcommand{\fsy}[1]{\mathsf{#1}}

\newcommand{\listsort}{\sort{list}}
\newcommand{\intsort}{\sort{int}}
\newcommand{\bvsort}[1]{\sort{BV{#1}}}
\newcommand{\elemsort}{\sort{elem}}

\newcommand{\wit}{{\mathit{wit}}}

\newcommand{\smtlib}{SMT-LIB~2\xspace}
\newcommand{\cvcfour}{CVC4\xspace}

\newcommand{\infinite}{{\mathit{inf}}}
\newcommand{\finite}{{\mathit{fin}}}

\newcommand{\qf}{{\it QF}}

\newcommand{\fv}[2]{{\it vars}_{#1}({#2})}
\newcommand{\sorts}[1]{{\cal S}_{#1}}
\newcommand{\func}[1]{{\cal F}_{#1}}

\newcommand{\pred}[1]{{\cal P}_{#1}}

\newcommand{\Ta}{{\cal T}}
\newcommand{\TaEven}{{{\cal T}_{\mathrm{Even}}^{\infty}}}

\newcommand{\Aa}{{\cal A}}
\newcommand{\Ba}{{\cal B}}
\newcommand{\Ca}{{\cal C}}

%



\newcommand{\set}[1]{\left\{#1\right\}}

\newcommand{\card}[1]{{\left |#1\right |}}

\newcommand{\A}{{\cal A}}

\newcommand{\T}{{\cal T}}

\newcommand{\w}{\wedge}

\newcommand{\distinct}{{\mathit{distinct}}}
\newcommand{\ssi}{S^{{\mathit{si}}}}
\newcommand{\snsi}{S^{{\mathit{nsi}}}}

\newcommand{\ra}{\rightarrow}
\newcommand{\bi}{\begin{itemize}}
\newcommand{\ei}{\end{itemize}}
\newcommand{\be}{\begin{enumerate}}
\newcommand{\ee}{\end{enumerate}}
\newcommand{\bd}{\begin{description}}
\newcommand{\ed}{\end{description}}

\newcommand{\ora}{\overrightarrow}

\newcommand{\suq}{\subseteq}
\newcommand{\st}{{\ |\ }}
\newcommand{\til}{,\dots,}





\usepackage{pgfplotstable,filecontents}
\usepgfplotslibrary{colorbrewer}
\pgfplotsset{compat=1.13}

\usepackage[inline]{enumitem}
\usepackage{hyperref}
\usepackage{cleveref}
\renewenvironment{proof}{\noindent{\it Proof}\/:}{\qed\hspace*{1pt}}

\pagestyle{plain}

\makeatletter
\def\@citecolor{blue}%
\def\@urlcolor{blue}%
\def\@linkcolor{blue}%

\def\orcidID#1{\smash{\href{http://orcid.org/#1}{\protect\raisebox{-1.25pt}{\protect\includegraphics{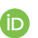}}}}}
\makeatother

\begin{document}
%
\newcommand{\thetitle}{Politeness and Stable Infiniteness: Stronger Together}

\title{\thetitle}
%
%

\hypersetup{
 pdfborder={0 0 0},
 colorlinks=true,
 linkcolor=blue,
 urlcolor=blue,
 citecolor=blue,
 pdfauthor={},
 pdftitle=\thetitle
}

\author{
Ying Sheng\inst{1}\orcidID{0000-0002-1883-2126}
\and
Yoni Zohar\inst{1}\orcidID{0000-0002-2972-6695}
\and
Christophe Ringeissen\inst{2}\orcidID{0000-0002-5937-6059}
\and\\
Andrew Reynolds\inst{3}\orcidID{0000-0002-3529-8682}
\and
Clark Barrett\inst{1}\orcidID{0000-0002-9522-3084}
\and
Cesare Tinelli\inst{3}\orcidID{0000-0002-6726-775X}
}

\institute{Stanford University\and
Universit\'e de Lorraine, CNRS, Inria, LORIA, F-54000 Nancy, France\and
The University of Iowa
}

\maketitle              
\begin{abstract}
We make two contributions to the study of polite combination
in satisfiability modulo theories.
The first is a separation between 
politeness and strong politeness, by
presenting a polite theory that is not strongly polite.
This result shows that proving strong politeness
 (which is often harder than proving politeness) is sometimes needed
in order to use polite combination.
The second contribution is an optimization to the polite combination method,
obtained by borrowing from the Nelson-Oppen method.
The Nelson-Oppen method
is based on guessing arrangements over shared variables.
In contrast,
polite combination requires an arrangement over \emph{all}
variables of the shared sorts.
We show that when using polite combination, if the other theory is stably infinite with respect to 
a shared sort, only
the shared variables of that sort need be considered in arrangements, as in the Nelson-Oppen method.
The time of reasoning about arrangements is exponential in the worst case,
so reducing the number of variables considered
has the potential to improve performance significantly.
We show preliminary evidence for this by demonstrating a speed-up on
a smart contract verification benchmark.
\end{abstract}

\section{Introduction}
\label{sec:intro}
Solvers for satisfiability modulo theories (SMT)~\cite{BT18} are used
in a wide variety of applications.  Many of these applications require determining the satisfiability of formulas with respect to a \emph{combination} of background theories. 
In order to make reasoning about combinations of theories modular and easily extensible,
a combination framework is essential.  Combination frameworks provide mechanisms for
automatically deriving a decision procedure for the combined theories by
using the decision procedures for the individual theories as
black boxes. To integrate a new theory into such a framework, it then suffices to
focus on the decoupled decision procedure for the new theory alone, together with its interface to the generic combination framework.

In 1979, Nelson and Oppen~\cite{NO79} proposed a general framework for combining theories with disjoint signatures.
In this framework, a quantifier-free formula in the combined theory is purified to a conjunction of
formulas, one for each theory. Each pure formula is then sent to a dedicated
theory solver, along with a guessed arrangement
(a set of equalities and disequalities that capture an equivalence relation) of the variables
shared among the pure formulas.
For completeness~\cite{Nelson-CSL-81-10}, this method requires all component theories to be stably infinite.
While many important theories are stably infinite,
some are not, including the widely-used theory
of fixed-length bit-vectors.
To address this issue,
the polite combination method was introduced by Ranise et al.~\cite{RRZ05}, and later
refined by Jovanovic and Barrett~\cite{JBLPAR}. 
In polite combination, one theory must be \emph{polite}, a stronger requirement than stable-infiniteness, but the requirement on the other theory is relaxed: specifically, it need not be stably infinite.
The price for this generality is that unlike the Nelson-Oppen method,
polite combination requires guessing arrangements over \emph{all} variables of certain sorts, not just the shared ones.
At a high level, polite theories have two properties:
smoothness and finite witnessability (see Section~\ref{sec:prelim}).  The polite combination theorem in~\cite{RRZ05} contained an error, which was identified in~\cite{JBLPAR}.  A fix was also proposed in~\cite{JBLPAR}, which relies on stronger requirements for finite witnessability.
Following Casal and Rasga~\cite{Casal2018}, we call this strengthened version 
{\em strong finite witnessability}.  A theory that is both smooth and strongly finitely witnessable is called \emph{strongly polite}.

This paper makes two contributions.
First, we give an affirmative answer to the question of whether politeness and strong politeness are different notions, by
giving
an example of a theory that is polite but not strongly polite.
The given theory is over an empty signature and has two sorts, and 
was originally studied in \cite{Casal2018} in the context of shiny theories.
Here we state and prove the separation of politeness and strong politeness, 
without using shiny theories.
Proving that a theory is strongly polite is harder than proving that it is 
just polite. This result shows that the additional effort is sometimes needed
in order to be able to use the combination theorem from \cite{JBLPAR}.
We show
that for empty signatures, at least two sorts are needed to present a polite
theory that is not strongly polite.
However, for the empty signature with only one sort, there is a finitely witnessable theory that is not strongly finite witnessable.
Such a theory cannot be smooth.

Second, we explore different polite combination scenarios,
where additional information is known about the theories being combined.
In particular, we improve the polite combination method for the case
where one theory is strongly polite w.r.t.~a set $S$ of sorts
and the other is stably infinite w.r.t.~a subset $S'\subseteq S$ of the sorts.
For such cases, we show that it is possible to perform 
Nelson-Oppen combination for $S'$ and polite combination
for $S\setminus S'$. %
This means that for the sorts in $S'$, only shared variables need to be considered for the guessed arrangement,
which can considerably reduce its size.
We also show that the set of shared variables can be reduced for a couple of other variations of
conditions on the theories.
Finally, we present a preliminary case study using
a challenge benchmark from a smart contract verification application.
 We show that the reduction of shared variables is evident
and significantly improves the solving time.  
Verification of smart contracts using SMT, and in particular the analyzed benchmark,
are the main motivation behind the second contribution of this paper.

%
%

\paragraph{Related Work:}
Polite combination is part of a more general effort to replace
the stable infiniteness symmetric condition in the Nelson-Oppen approach 
with a weaker condition.
Other examples of this effort include the notions of
\emph{shiny}~\cite{TZ05}, \emph{parametric}~\cite{DBLP:conf/tacas/KrsticGGT07}, 
and
\emph{gentle}~\cite{DBLP:conf/frocos/Fontaine09} theories.
Gentle, shiny and polite theories can be combined \`a la Nelson-Oppen with any arbitrary theory.
Shiny theories were introduced by Tinelli and Zarba~\cite{TZ05} as a class of mono-sorted theories.
Based on the same principles as shininess, politeness is particularly well-suited to deal with theories expressed in many-sorted logic. 
Polite theories were introduced by Ranise et al.~\cite{RRZ05}
to provide a more effective combination approach
compared to parametric and shiny theories, the former requiring solvers to reason
about cardinalities and the latter
relying on expensive computations of
minimal cardinalities of models.
Shiny theories were extended to many-sorted signatures in~\cite{RRZ05}, where there is a sufficient condition for their equivalence with polite theories.
For the mono-sorted case, a sufficient condition for the equivalence of shiny theories
and strongly polite theories was given by Casal and Rasga~\cite{casal2014equivalence}.
In later work~\cite{Casal2018}, the same authors proposed a generalization of shiny theories 
to many-sorted signatures different from the one in~\cite{RRZ05}, and proved that it
is equivalent
to strongly polite theories with a decidable quantifier-free fragment.
%
The strong politeness of the theory of algebraic datatypes \cite{BST07} was proven in
\cite{DBLP:conf/cade/0007ZRLFB20}. That paper also introduced {\em additive witnesses}, that provided a sufficient condition for a polite theory to be also strongly polite. In this paper we present a theory that is polite but not strongly polite. In accordance with \cite{DBLP:conf/cade/0007ZRLFB20}, the witness that we provide for this theory is not additive.

The paper is organized as follows.
\Cref{sec:prelim} provides the necessary notions from
first-order logic and polite theories.
\Cref{sec:strong_vs_weak} discusses the difference between politeness 
and strong politeness and shows they are not equivalent.
\Cref{sec:polite_nelson_oppen} gives the improvements for the combination process under certain conditions,
and 
\Cref{sec:casestudy} demonstrates the effectiveness of these improvements 
for a challenge benchmark.
\footnote{
\begin{conf}
Due to space constraints,
some proofs are omitted. They can be found in an extended version at \url{https://arxiv.org/abs/2104.11738}.\end{conf}\begin{report}The appendix contains proofs that are omitted from the main text.\end{report}}

\section{Preliminaries}
\label{sec:prelim}
\subsection{Signatures and Structures}
We briefly review the usual definitions of
many-sorted first-order logic with equality (see
\cite{enderton2001mathematical,TinZar-JELIA-04}
for more details).
A {\em signature} $\Sigma$ consists of
a set $\sorts{\Sigma}$ (of {\em sorts}),
a set $\func{\Sigma}$ of function symbols,
and a set $\pred{\Sigma}$ of predicate symbols.
We assume $\sorts{\Sigma}$, $\func{\Sigma}$ and $\pred{\Sigma}$ 
are countable.
Function symbols have arities of the form $\sigma_{1}\times\ldots\times\sigma_{n}\ra\sigma$,
and predicate symbols have arities of the form $\sigma_{1}\times\ldots\times\sigma_{n}$,
with $\sigma_{1}\til\sigma_{n},\sigma\in\sorts{\Sigma}$.
For each sort $\sigma\in\sorts{\Sigma}$,
$\pred{\Sigma}$ includes an {\em equality symbol} $=_{\sigma}$ of arity
$\sigma\times\sigma$.
We denote it by $=$ when $\sigma$ is clear from context.
When $=_{\sigma}$ are the only symbols in $\Sigma$, we say that
$\Sigma$ is {\em empty}.
If two signatures share no symbols except $=_{\sigma}$ we call them
{\em disjoint}.
We assume an underlying countably infinite set of variables for 
each sort.
Terms, formulas, and literals are defined in the usual way.
For a $\Sigma$-formula $\phi$ and a sort $\sigma$, we denote the set of 
free variables in $\phi$ of sort $\sigma$ by $\fv{\sigma}{\phi}$.
This notation naturally extends to $\fv{S}{\phi}$ when $S$ is a set of sorts.
$\fv{}{\phi}$ is the set of all free variables in $\phi$.
We denote by $\qf(\Sigma)$ the set of quantifier-free $\Sigma$-formulas.
%

A {\em $\Sigma$-structure} is a many-sorted structure that provides semantics for the symbols in $\Sigma$ (but not for variables).
It consists of a {\em domain} $\sigma^{\Aa}$ 
for each sort $\sigma\in\sorts{\Sigma}$, 
an interpretation $f^{\Aa}$ for every $f\in\func{\Sigma}$, as well as
an interpretation  $P^{\Aa}$ for every $P\in\pred{\Sigma}$.
We further require that  $=_{\sigma}$ be interpreted
as the identity relation over $\sigma^{\Aa}$ for every $\sigma\in\sorts{\Sigma}$.
A {\em $\Sigma$-interpretation} $\Aa$ is an extension of
a $\Sigma$-structure with interpretations for some set of variables.  
For any $\Sigma$-term $\alpha$, $\alpha^{\Aa}$
denotes the interpretation of $\alpha$ in $\Aa$.
When $\alpha$ is a set of $\Sigma$-terms, $\alpha^{\Aa}=\set{x^{\Aa}\mid x\in \alpha}$.
Satisfaction is defined as usual.  
$\Aa \models \varphi$ denotes that $\Aa$ satisfies $\varphi$.

A $\Sigma$-theory $\Ta$ is a class of all $\Sigma$-structures 
that satisfy some set $Ax$ of $\Sigma$-sentences. 
For each such set $Ax$, we say that $\Ta$ is {\em axiomatized} by $Ax$.
%
%
%
A $\Sigma$-interpretation 
whose variable-free part is in $\Ta$ is called
a $\Ta$-interpretation.
A $\Sigma$-formula $\phi$ is $\Ta$-satisfiable if $\Aa\models\phi$ for some $\Ta$-interpretation $\Aa$.
A set $A$ of $\Sigma$-formulas is $\Ta$-satisfiable if $\Aa\models\phi$ for 
every $\phi\in A$.
Two formulas $\phi$ and $\psi$ are {\em $\Ta$-equivalent} if they are satisfied by the same $\Ta$-interpretations.

Note that for any class $\Ca$ of $\Sigma$-structures there is a
theory $\Ta_{\Ca}$ that {\em corresponds} to it, with the same satisfiable formulas:
the $\Sigma$-theory axiomatized by the set $Ax$ of $\Sigma$-sentences
that are satisfied in every structure of $\Ca$.
In the examples that follow, we define theories $\Ta_{\Ca}$ implicitly 
by specifying only the class $\Ca$,
as done in the \smtlib standard \cite{SMTLib2017}. 
This can be done without loss of generality.\begin{report}\footnote{\sf For further discussion on this point, see \Cref{ap:sticky}.}\end{report}

\begin{example}
\label{ex:lists}
Let $\siglists$ be a signature of finite lists containing the sorts
 $\elemsort_1$, $\elemsort_2$, and $\listsort$,
 as well as the function symbols 
 $\fsy{cons}$ of arity $\elemsort_1\times\elemsort_2\times\listsort\to\listsort$,
 $\fsy{car}_1$ of arity $\listsort\to\elemsort_1$,
 $\fsy{car}_2$ of arity $\listsort\to\elemsort_2$,
 $\fsy{cdr}$ of arity $\listsort\to\listsort$,
 and 
 $\fsy{nil}$ of arity $\listsort$.
 The $\siglists$-theory $\thlist$ corresponds to an
\smtlib theory of algebraic datatypes~\cite{SMTLib2017,BST07}, where
 $\elemsort_1$ and $\elemsort_2$ are interpreted as some sets (of ``elements"),
 and $\listsort$ is interpreted as finite lists
 of pairs of elements, one from $\elemsort_1$ and the other
 from $\elemsort_2$.
 $\fsy{cons}$ is a list constructor that takes
 two elements and a list, and inserts the two elements
 at the head of the list.
 The pair 
 $(\fsy{car}_1(l),\fsy{car}_2(l))$ is the first entry in $l$,
 and $\fsy{cdr}(l)$ is the list obtained from $l$ by
 removing its first entry.
  $\fsy{nil}$ is the empty list.
\qed
\end{example}

\begin{example}
\label{ex:intbv}
	The signature $\sigint$ includes a single sort
	$\intsort$, all numerals $0,1,\ldots$,
	the function symbols $+$, $-$ and $\cdot$
	of arity $\intsort\times\intsort\to\intsort$
	and the predicate symbols
	$<$ and $\leq$
	of arity $\intsort\times\intsort$.
	The $\sigint$-theory $\thint$ corresponds to integer arithmetic in \smtlib,
	and the interpretation of the symbols is the same as in the standard
	structure of the integers.
	The signature $\sigbvfour$ includes a single sort $\bvsort{4}$
and various function
	and predicate symbols for reasoning about bit-vectors
	 of length $4$ (such as $\&$ for bit-wise $and$,
	constants of the form $0110$, etc.).
	The $\sigbvfour$-theory $\thbvfour$ corresponds to \smtlib bit-vectors
	of size 4, with the 
	expected semantics of constants and operators.
	\qed
\end{example}

Let $\Sigma_{1},\Sigma_{2}$ be signatures,
$\Ta_{1}$ a $\Sigma_{1}$-theory, and $\Ta_{2}$ a $\Sigma_{2}$-theory.
The \emph{combination} of $\Ta_{1}$ and $\Ta_{2}$, denoted 
$\Ta_{1}\oplus \T_{2}$,
 consists of all $\Sigma_{1}\cup\Sigma_{2}$-structures $\Aa$,
 such that $\Aa^{\Sigma_{1}}$ is in $\Ta_{1}$ and $\Aa^{\Sigma_{2}}$ is in $\Ta_{2}$,
where $\Aa^{\Sigma_{i}}$ is the reduct of $\Aa$ to $\Sigma_{i}$ for
$i\in\set{1,2}$.

\begin{example}
\label{ex:thlistintbvfour}
Let $\thintbvfour$ be $\thint\oplus\thbvfour$.
It is the combined theory
of integers and bit-vectors. It has all the sorts and 
operators from both theories.
If we rename the sorts $\elemsort_1$ and $\elemsort_2$ of $\siglists$ 
to $\intsort$ and $\bvsort{4}$, respectively, we can obtain
a theory
$\thlistintbvfour$ defined as 
$\thintbvfour\oplus\thlist$.
This is the theory of lists of pairs, where
each pair consists of an integer and a bit-vector of size $4$.
\qed
\end{example}

%

The following definitions and theorems 
will be useful in the sequel.

\begin{theorem}[Theorem 9 of~\cite{TinZar-JELIA-04}]
\label{thm:ordered-sl}
Let $\Sigma$ be a signature,  and
$A$ a set of $\Sigma$-formulas that is satisfiable.
Then there exists an interpretation $\Aa$ that satisfies $A$,
in which $\sigma^{\Aa}$ is countable whenever it is infinite.\footnote{In~\cite{TinZar-JELIA-04} this was proven more generally, for ordered sorted logics.}
\end{theorem}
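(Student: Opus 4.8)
The plan is to prove this as the many-sorted version of the downward Löwenheim--Skolem theorem: start from an arbitrary model of $A$ and build an elementary substructure whose infinite domains have been shrunk to countable size, while its finite domains are left untouched (and are automatically preserved by elementarity anyway).

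First I would reduce the free variables to constants. Since $A$ is satisfiable, fix a $\Sigma$-interpretation $\Aa$ with $\Aa \models A$, and expand $\Sigma$ to a signature $\Sigma^{+}$ by adding, for each variable $x$ of sort $\sigma$ occurring free in some formula of $A$, a fresh constant $c_x$ of sort $\sigma$, interpreted in $\Aa$ as $x^{\Aa}$. Because the vocabulary of $\Sigma$ and the set of variables are countable, $\Sigma^{+}$ is countable, and in particular there are only countably many $\Sigma^{+}$-formulas. Let $A^{+}$ be the set of $\Sigma^{+}$-sentences obtained from $A$ by replacing each free $x$ by $c_x$; then $\Aa \models A^{+}$.

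Next I would build an elementary substructure $\Ba \preceq \Aa$ by a Skolem-closure construction. Pick seed sets $B_\sigma^{0} \suq \sigma^{\Aa}$: if $\sigma^{\Aa}$ is finite, take $B_\sigma^{0} = \sigma^{\Aa}$; if $\sigma^{\Aa}$ is infinite, take $B_\sigma^{0}$ to be a nonempty countable subset containing $c_x^{\Aa}$ for every added constant of sort $\sigma$. Then iterate: given $\{B_\sigma^{n}\}_{\sigma}$, for every $\Sigma^{+}$-formula $\varphi(\bar{y},z)$ with $z$ of sort $\sigma$ and every tuple $\bar{b}$ of elements of the appropriate sorts drawn from $\bigcup_\sigma B_\sigma^{n}$ with $\Aa \models \exists z\, \varphi(\bar{b},z)$, add one witness $a \in \sigma^{\Aa}$ with $\Aa \models \varphi(\bar{b},a)$ to $B_\sigma^{n+1}$, keeping $B_\sigma^{n} \suq B_\sigma^{n+1}$. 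Set $B_\sigma = \bigcup_n B_\sigma^{n}$. The usual arguments show $\{B_\sigma\}_\sigma$ is closed under all function symbols of $\Sigma^{+}$ (function applications are witnesses of formulas $z = f(\bar{y})$) and contains all constant interpretations, so it is the universe of a substructure $\Ba$ of $\Aa$; and the construction is precisely the Tarski--Vaught condition, so $\Ba \preceq \Aa$ as $\Sigma^{+}$-structures. Hence $\Ba \models A^{+}$, and reducing $\Ba$ to $\Sigma$ and interpreting each free variable $x$ as $c_x^{\Ba}$ gives a $\Sigma$-interpretation of $A$. For the cardinalities: by induction each $B_\sigma^{n}$ is countable, since the step adds at most one element per pair (formula, tuple), and there are only countably many $\Sigma^{+}$-formulas and countably many finite tuples over a countable set; so each $B_\sigma$ is countable. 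And when $\sigma^{\Aa}$ is finite we seeded with all of it, so $\sigma^{\Ba} = \sigma^{\Aa}$ stays finite. Thus $\sigma^{\Ba}$ is countable for every infinite sort, as required.

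The hard part will be getting the many-sorted Tarski--Vaught test and the attached closure argument right: one must ensure that the witness added for an existential over one sort is a genuine element making the formula true with parameters ranging over all the (countably many, growing) domains, and that $\omega$ iterations really suffice for closure --- adding a witness in sort $\sigma$ can open up new satisfiable existentials whose parameters involve that witness, possibly in other sorts, so a single pass is not enough. Everything else is routine: the cardinal arithmetic $\aleph_0 \cdot \aleph_0 = \aleph_0$, and the standard fact that an elementary substructure preserves the truth of the sentences of $A^{+}$.
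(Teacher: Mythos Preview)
Your proof is a correct and standard argument for the many-sorted downward L\"owenheim--Skolem theorem via the Tarski--Vaught construction. However, the paper does not supply its own proof of this statement: it is quoted as Theorem~9 of~\cite{TinZar-JELIA-04} and used as a black box throughout (the footnote even remarks that the cited source proves it in the more general setting of order-sorted logic). So there is nothing in the paper to compare your argument against; you have simply filled in a result the authors chose to import.

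One minor remark: your reduction to a countable expanded signature $\Sigma^{+}$ quietly uses that $\Sigma$ itself and the variable supply are countable. That assumption is indeed made explicitly in the paper's preliminaries, so the step is justified in this context, but it is worth flagging since the conclusion would fail over an uncountable language.
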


The following theorem from~\cite{JBLPAR} is a variant of a theorem from~\cite{DBLP:conf/jelia/TinelliZ04}.

\begin{definition}[Arrangement]
Let 
$V$ be a finite set of variables whose sorts are in $S$ and let 
$\set{V_{\sigma}\st \sigma\in S}$ be a partition of $V$ such that $V_{\sigma}$ is the set of variables of sort $\sigma$ in $V$.
A formula $\delta$ is an {\em arrangement of $V$}
if 
\[
  \delta = \bigwedge_{\sigma\in S}(\bigwedge_{(x,y)\in E_{\sigma}}(x=y) \ \w 
           \bigwedge_{x,y\in V_\sigma,(x,y)\notin E_{\sigma}}(x\neq y))\ ,
\]
 where $E_{\sigma}$ is some equivalence relation over $V_{\sigma}$ for each $\sigma\in S$.

\end{definition}

\begin{theorem}[Theorem 2.5 of~\cite{JBLPAR}]
\label{oldbutgood}
For $i=1,2$, let
$\Sigma_i$ be disjoint signatures,
$S_i=\sorts{\Sigma_i}$ with $S=S_1\cap S_2$,
$\Ta_i$ be a $\Sigma_i$-theory,
$\Gamma_i$ be a set of $\Sigma_i$-literals, 
and
$V=\fv{}{\Gamma_1}\cap \fv{}{\Gamma_2}$.
If there exist a $\Ta_1$-interpretation $\Aa$,
a $\Ta_2$ interpretation $\Ba$,
and an arrangement $\delta_V$ of $V$ such that:
\begin{enumerate*}
    \item $\Aa\models\Gamma_1\cup\delta_V$;
    \item $\Ba\models\Gamma_2\cup\delta_V$; and
    \item $|A_\sigma|=|B_\sigma|$ for every $\sigma\in S$,
\end{enumerate*}
then $\Gamma_{1}\cup \Gamma_{2}$ is $\Ta_1\oplus \Ta_2$-satisfiable.
\end{theorem}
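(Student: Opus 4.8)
The plan is to manufacture a single $(\Sigma_1\cup\Sigma_2)$-interpretation $\Ca$ witnessing $\Ta_1\oplus \Ta_2$-satisfiability of $\Gamma_1\cup\Gamma_2$ by ``gluing'' $\Aa$ and $\Ba$ together along the shared sorts. Write $A_\sigma=\sigma^\Aa$ and $B_\sigma=\sigma^\Ba$, and for each $\sigma\in S$ let $V_\sigma$ be the set of variables of sort $\sigma$ in $V$; since any variable in $V$ occurs in both a $\Sigma_1$-literal and a $\Sigma_2$-literal, its sort lies in $S=S_1\cap S_2$, so $\{V_\sigma\}_{\sigma\in S}$ is indeed a partition of $V$ and the arrangement $\delta_V$ refers only to these classes, with equivalence relations $E_\sigma$.

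First I would build, for each $\sigma\in S$, a bijection $h_\sigma:A_\sigma\to B_\sigma$ with the extra property that $h_\sigma(x^\Aa)=x^\Ba$ for every $x\in V_\sigma$. The point is that this partial assignment on the finite set $\{x^\Aa\mid x\in V_\sigma\}$ is well defined and injective: since $\Aa\models\delta_V$ and $\Ba\models\delta_V$, for $x,y\in V_\sigma$ we have $x^\Aa=y^\Aa$ iff $(x,y)\in E_\sigma$ iff $x^\Ba=y^\Ba$. Hypothesis (3), namely $|A_\sigma|=|B_\sigma|$, then lets us extend this variable-respecting partial bijection to a total bijection $h_\sigma$. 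Transporting $\Ba$ along the family $\{h_\sigma\}_{\sigma\in S}$ (using the identity on the sorts of $\Sigma_2$ not in $S$) yields a $\Sigma_2$-interpretation $\Ba'$ isomorphic to $\Ba$, with $\sigma^{\Ba'}=A_\sigma$ for $\sigma\in S$ and $x^{\Ba'}=x^\Aa$ for every $x\in V$. Since $\Ta_2$ and the satisfaction relation are closed under isomorphism, $\Ba'$ is a $\Ta_2$-interpretation and $\Ba'\models\Gamma_2\cup\delta_V$. Thus, renaming $\Ba'$ back to $\Ba$, we may henceforth assume that $\Aa$ and $\Ba$ already agree on the domains of all sorts in $S$ and on the values of all variables in $V$.

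Next I would define $\Ca$ by amalgamation: put $\sigma^\Ca=\sigma^\Aa$ for $\sigma\in S_1$ and $\sigma^\Ca=\sigma^\Ba$ for $\sigma\in S_2$ (consistent on $S$ by the normalization); interpret every symbol of $\Sigma_1$ as in $\Aa$ and every symbol of $\Sigma_2$ as in $\Ba$, which is unambiguous because the two signatures share only the equality symbols, interpreted as identity on both sides; and assign to a variable the value $x^\Aa$ if $x\in\fv{}{\Gamma_1}$ and $x^\Ba$ if $x\in\fv{}{\Gamma_2}$, the two prescriptions agreeing on $V=\fv{}{\Gamma_1}\cap\fv{}{\Gamma_2}$ (variables outside both sets are irrelevant and may be assigned arbitrarily). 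Then $\Ca^{\Sigma_1}$ is the structure part of $\Aa$, hence in $\Ta_1$, and $\Ca$ agrees with $\Aa$ on every $\Sigma_1$-symbol and on every free variable of $\Gamma_1$, so $\Ca\models\Gamma_1$; symmetrically $\Ca^{\Sigma_2}\in\Ta_2$ and $\Ca\models\Gamma_2$, using $\Ba\models\Gamma_2$. Therefore $\Ca\in\Ta_1\oplus\Ta_2$ and $\Ca\models\Gamma_1\cup\Gamma_2$, which is the claim.

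The only genuinely delicate step is the construction of the $h_\sigma$: each must be simultaneously a bijection between the two shared-sort domains \emph{and} compatible with the interpretations of all shared variables of that sort, and it is exactly the fact that $\Aa$ and $\Ba$ satisfy the \emph{same} arrangement $\delta_V$ that makes these two demands co-satisfiable, while hypothesis (3) is precisely what is needed to extend the resulting partial bijection to a total one. Everything after that is routine bookkeeping about reducts and invariance of truth under isomorphism; in particular, no cardinality-lowering argument is needed here, since the relevant cardinalities are supplied directly by the hypotheses.
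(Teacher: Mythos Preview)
Your proof is correct and follows the standard amalgamation argument for this kind of combination lemma. Note, however, that the paper does not actually prove this theorem: it is quoted verbatim as Theorem~2.5 of~\cite{JBLPAR} in the preliminaries and used as a black box in the proof of \Cref{yoniafterclark}. So there is no ``paper's own proof'' to compare against here; your argument is precisely the one the cited reference gives, and the delicate point you single out (that the common arrangement makes the partial map $x^\Aa\mapsto x^\Ba$ well defined and injective, while the cardinality hypothesis lets it extend to a full bijection) is exactly the crux.
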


\subsection{Polite Theories}
\label{prelim:polite}
We now give the background definitions necessary for both Nelson-Oppen and polite combination.  In what follows, $\Sigma$ is an arbitrary (many-sorted) signature, 
$S\suq\sorts{\Sigma}$,
and $\Ta$ is a $\Sigma$-theory.  We start with stable infiniteness and smoothness.

\begin{definition}[Stably Infinite]
$\Ta$ is {\em stably infinite with respect to $S$}
if every quantifier-free $\Sigma$-formula 
that is $\Ta$-satisfiable is also satisfiable
in a $\Ta$-interpretation $\Aa$ in which $\sigma^{\Aa}$
is infinite for every $\sigma\in S$.
\end{definition}

\begin{definition}[Smooth]
$\Ta$ is \,{\em smooth} w.r.t.\ $S$ if for every quantifier-free formula $\phi$,  $\Ta$-interpretation $\Aa$ that satisfies $\phi$, and function 
$\kappa$ from $S$ to the class of cardinals such that $\kappa(\sigma)\geq\card{\sigma^{\Aa}}$ for every $\sigma\in S$, 
there exists a $\Ta$-interpretation $\Aa'$ that satisfies $\phi$ with 
$\card{\sigma^{\Aa'}}=\kappa(\sigma)$ for every $\sigma\in S$.
\end{definition}

%

\noindent
We identify
singleton sets with their single elements when there is no ambiguity 
(e.g., when saying that a theory is smooth w.r.t.\ a sort $\sigma$).

We next define politeness and related concepts, following the presentation in~\cite{DBLP:conf/cade/0007ZRLFB20}.
Let $\phi$ be a quantifier-free $\Sigma$-formula.
A $\Sigma$-interpretation $\Aa$ {\em finitely witnesses $\phi$ for $\Ta$ w.r.t.\ $S$} (or, is a {\em finite witness of $\phi$ for $\Ta$ w.r.t.\ $S$}), if 
$\Aa\models\phi$ and 
$\sigma^{\Aa}=\fv{\sigma}{\phi}^{\Aa}$
 for every $\sigma\in S$.
We say that $\phi$ is 
{\em  finitely witnessed for $\Ta$ w.r.t.\ $S$} if
it is either $\Ta$-unsatisfiable or has a finite witness for $\Ta$ w.r.t.\ $S$.
We say that $\phi$ is 
{\em strongly finitely witnessed for $\Ta$ w.r.t.\ $S$} if
$\phi\w\delta_{V}$ is  finitely witnessed for $\Ta$ w.r.t.\ $S$ for every arrangement $\delta_{V}$ of $V$, where $V$ is any set of variables whose sorts are in $S$.
A function $\wit : \qf(\Sigma) \ra \qf(\Sigma)$ is
a {\em (strong) witness for $\Ta$ w.r.t.\ $S$} if for every $\phi\in \qf(\Sigma)$ we have that:
\begin{enumerate*}
\item $\phi$ and $\exists\,\ora{w}.\:\wit(\phi)$ are $\Ta$-equivalent for $\ora{w}=\fv{}{\wit(\phi)}\setminus\fv{}{\phi}$; and
\item $\wit(\phi)$ is (strongly) finitely witnessed for $\Ta$ w.r.t.\ $S$.
\end{enumerate*}
$\Ta$ is {\em (strongly) finitely witnessable} w.r.t.\ $S$ if there exists a 
computable (strong) witness for $\Ta$ w.r.t.\ $S$.
$\Ta$ is {\em (strongly) polite w.r.t.\ $S$} if it is smooth and (strong\-ly) finitely witnessable w.r.t.\ $S$.


\section{Politeness and Strong Politeness}
\label{sec:strong_vs_weak}
In this section we study the difference between politeness and strong politeness.
Since the introduction of strong politeness in~\cite{JBLPAR}, it has been unclear
whether it is strictly stronger than politeness,
that is, whether there exists a theory that is polite but not strongly polite.
We present an example of such a theory, answering the open question affirmatively. This result is followed by further analysis of notions related to politeness.
This section is organized as follows.
In \Cref{sec:witvsstrongwit} we reformulate an example given in
\cite{JBLPAR}, showing that there are witnesses that are not
strong witnesses. 
We then present a polite theory that is not strongly polite in
\Cref{sec:politenostrong}.
The theory is over a signature with two sorts but is otherwise empty.
We show in \Cref{sec:singlesort} that politeness and strong politeness are equivalent for empty signatures with a single sort. 
Finally, we show in \Cref{sec:singlesortfin} that this equivalence does not hold for finite witnessability alone.
%

%
%

\subsection{Witnesses vs. Strong witnesses}
\label{sec:witvsstrongwit}
In \cite{JBLPAR} an example was given
for a witness that is not strong.
We reformulate this example in terms of the notions
that are defined in the current paper, that is,
witnessed formulas are not the same as
strongly witnessed formulas (\Cref{weaknonweak1}),
and witnesses are not the same as strong witnesses 
(\Cref{nonweakwitfunc}).

\begin{example}
\label{weaknonweak1}
Let $\Sigma_{0}$ be a signature with a single sort $\sigma$ and no function or predicate symbols, and let $\Ta_{0}$ be a $\Sigma_{0}$-theory 
consisting of all $\Sigma_{0}$-structures with at least two elements.
Let $\phi$ be the formula $x=x\land w=w$.
This formula is finitely witnessed for $\Ta_{0}$ w.r.t.\ $\sigma$, but not strongly.
Indeed, for
$\delta_{V}\equiv(x=w)$,
$\phi\w\delta_{V}$ is not finitely witnessed for $\Ta_{0}$ w.r.t.\ $\sigma$: 
a finite witness would be required to have only a single element
and would therefore not be a $\Ta_{0}$-interpretation.
\qed
\end{example}

\noindent
The next example shows that witnesses and strong witnesses are not equivalent.

\begin{example}
\label{nonweakwitfunc}
Take $\Sigma_{0}$, $\sigma$, and $\Ta_{0}$ as in \Cref{weaknonweak1}, and define 
$\wit(\phi)$ as the function $(\phi\ \w\ w_{1}=w_{1}\ \w\ w_{2}=w_{2})$ for fresh $w_{1},w_{2}$.
The function is a  witness for $\Ta_{0}$ w.r.t.\ $\sigma$.
However, it is not a strong witness for $\Ta$ w.r.t.~$\sigma$. 
 \qed
\end{example}

Although the theory $\Ta_{0}$ in the above examples does serve to distinguish formulas and witnesses that are and are not strong, it cannot be used to do the same for theories themselves.  
This is because $\Ta_{0}$ is, in fact, strongly polite, via a different witness function.
\begin{example}
\label{ex:strong}
 The function $\wit'(\phi)=(\phi\w w_{1}\neq w_{2})$, for some $w_{1},w_{2}\notin\fv{\sigma}{\phi}$, \emph{is} a strong witness for $\Ta_{0}$ w.r.t.\ $S$,
 as proved in~\cite{JBLPAR}.
\qed
\end{example}

\noindent
A natural question, then, is whether there is a theory that can separate the two notions of politeness.
The following subsection provides an affirmative answer.

\subsection{A Polite Theory that is not Strongly Polite}
\label{sec:politenostrong}

Let $\Sigma_{2}$ be a signature with two sorts
$\sigma_{1}$ and $\sigma_2$ and no function or
predicate symbols (except $=$).
Let $\Ta_{2,3}$ be the $\Sigma_{2}$-theory from \cite{Casal2018},
consisting of all $\Sigma_{2}$-structures $\A$
such that
either $\card{\sigma_{1}^{\Aa}}= 2 \wedge \card{\sigma_{2}^{\Aa}}\ge\aleph_0$ 
or $\card{\sigma_{1}^{\Aa}}\geq 3 \wedge \card{\sigma_{2}^{\Aa}}\geq 3$~\cite{Casal2018}.%
\footnote{In~\cite{Casal2018}, the first condition is written $\card{\sigma_{1}^{\Aa}}\ge 2$.  We use equality as this is equivalent and we believe it makes things clearer.}

$\Ta_{2,3}$ is polite, but is not strongly polite.
Its smoothness is shown by extending any given structure with
new elements as much as necessary. 
\begin{lemma}
\label{lem:ta23smooth}
$\Ta_{2,3}$ is smooth w.r.t.~$\set{\sigma_{1},\sigma_{2}}$.
\end{lemma}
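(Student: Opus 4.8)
The plan is to verify the definition of smoothness directly for $\Ta_{2,3}$ w.r.t.\ $\set{\sigma_1,\sigma_2}$. So I fix a quantifier-free $\Sigma_2$-formula $\phi$, a $\Ta_{2,3}$-interpretation $\Aa$ with $\Aa\models\phi$, and a function $\kappa$ on $\set{\sigma_1,\sigma_2}$ with $\kappa(\sigma_i)\geq\card{\sigma_i^{\Aa}}$, and I must build $\Aa'\models\phi$ with $\card{\sigma_i^{\Aa'}}=\kappa(\sigma_i)$. The only data in $\Aa$ relevant to $\phi$ are the two domain sizes and the interpretation of finitely many variables (there are no function or predicate symbols besides $=$), so the construction is: take $\Aa'$ to have domains $\sigma_i^{\Aa'}\supseteq\sigma_i^{\Aa}$ of cardinality $\kappa(\sigma_i)$ (add fresh elements), and keep the interpretation of every variable in $\phi$ unchanged. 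Since $\Sigma_2$ is empty, a quantifier-free formula over it is a Boolean combination of equalities between variables, and such a formula's truth depends only on which variables are assigned equal values; enlarging the domains does not change that, so $\Aa'\models\phi$.

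The remaining point — and really the only nontrivial one — is that $\Aa'$ is again a $\Ta_{2,3}$-interpretation, i.e.\ the enlarged pair of cardinalities $(\kappa(\sigma_1),\kappa(\sigma_2))$ still satisfies the disjunction ``$\card{\sigma_1}=2 \wedge \card{\sigma_2}\geq\aleph_0$'' or ``$\card{\sigma_1}\geq 3 \wedge \card{\sigma_2}\geq 3$''. I would argue by cases on which disjunct $\Aa$ satisfies. If $\card{\sigma_1^{\Aa}}\geq 3$ and $\card{\sigma_2^{\Aa}}\geq 3$, then since $\kappa(\sigma_i)\geq\card{\sigma_i^{\Aa}}\geq 3$, the second disjunct holds for $\kappa$ as well. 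If $\card{\sigma_1^{\Aa}}=2$ and $\card{\sigma_2^{\Aa}}\geq\aleph_0$, then $\kappa(\sigma_2)\geq\aleph_0$ automatically, and $\kappa(\sigma_1)$ is either $2$ (first disjunct holds) or $\geq 3$ (in which case $\kappa(\sigma_1)\geq 3$ and $\kappa(\sigma_2)\geq\aleph_0\geq 3$, so the second disjunct holds). In every case $\Aa'\in\Ta_{2,3}$.

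The main obstacle is essentially bookkeeping: the ``bad'' configuration one might worry about is $\card{\sigma_1}=2$ together with $\card{\sigma_2}$ finite, but that configuration is excluded from $\Ta_{2,3}$ to begin with, and the only way to reach it by \emph{enlarging} would be to shrink, which $\kappa$ never does — so no genuine difficulty arises. I would state the proof in roughly this order: (1) set up $\phi,\Aa,\kappa$; (2) define $\Aa'$ by padding the domains and reusing variable assignments; (3) observe $\Aa'\models\phi$ because emptiness of $\Sigma_2$ makes satisfaction depend only on the equality pattern of the assigned variables; (4) do the two-case cardinality check to conclude $\Aa'$ is a $\Ta_{2,3}$-interpretation of the prescribed sizes. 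This also makes clear why the hypothesis is monotone (``$\geq$'') in an essential way, which is the whole content of smoothness here.
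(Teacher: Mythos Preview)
Your proposal is correct and follows essentially the same approach as the paper: enlarge each domain to the target cardinality while keeping the variable interpretations fixed, observe that satisfaction of $\phi$ is preserved since $\Sigma_2$ is empty, and then verify by a short case analysis that the enlarged structure still lies in $\Ta_{2,3}$. The only cosmetic difference is that the paper splits cases on the value of $\kappa(\sigma_1)$ (whether it equals $2$ or is $\geq 3$) rather than on which disjunct $\Aa$ satisfies, but the content is the same.
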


For finite witnessability, consider the function $\wit$ defined as follows:
\begin{equation}
\wit(\phi):=\phi \wedge x_1=x_1\w x_2=x_2\w x_3=x_3\w  y_1=y_1\w y_2=y_2\w y_3=y_3
\end{equation}
for
fresh variables $x_1$, $x_2$, and $x_3$ of sort $\sigma_{1}$ and
$y_1$, $y_2$, and $y_3$ of sort $\sigma_{2}$.
It can be shown that $\wit$ is a witness for $\Ta_{2,3}$
but there is no strong witness for it.

\begin{lemma}
\label{lem:ta23fw}
$\Ta_{2,3}$ is finitely witnessable w.r.t.~$\set{\sigma_{1},\sigma_{2}}$.
\end{lemma}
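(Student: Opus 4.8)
The plan is to show that the function $\wit$ displayed above is a computable witness for $\Ta_{2,3}$ w.r.t.\ $\set{\sigma_1,\sigma_2}$, from which finite witnessability follows immediately. Computability of $\wit$ is clear, so two properties remain: (i) $\phi$ and $\exists\,\ora{w}.\:\wit(\phi)$ are $\Ta_{2,3}$-equivalent, where $\ora{w}=\fv{}{\wit(\phi)}\setminus\fv{}{\phi}=\set{x_1,x_2,x_3,y_1,y_2,y_3}$; and (ii) $\wit(\phi)$ is finitely witnessed for $\Ta_{2,3}$ w.r.t.\ $\set{\sigma_1,\sigma_2}$ for every $\phi\in\qf(\Sigma_2)$.

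For (i), I would observe that $\wit(\phi)=\phi\w\psi$, where $\psi$ is a conjunction of the trivially valid atoms $x_j=x_j$ and $y_j=y_j$, and that none of the variables in $\ora{w}$ occur in $\phi$. Since every $\Sigma_2$-structure, and in particular every $\Ta_{2,3}$-interpretation, has non-empty domains, $\exists\,\ora{w}.\:\psi$ is valid; hence $\exists\,\ora{w}.\:\wit(\phi)$ is $\Ta_{2,3}$-equivalent to $\phi$. This part is routine and uses nothing special about $\Ta_{2,3}$.

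The core of the argument is (ii). If $\wit(\phi)$ is $\Ta_{2,3}$-unsatisfiable there is nothing to prove, so assume $\B\models\wit(\phi)$ for some $\Ta_{2,3}$-interpretation $\B$; in particular $\B\models\phi$. For $i=1,2$, let $E_i$ be the equivalence relation on $\fv{\sigma_i}{\phi}$ given by $u\mathrel{E_i}v$ iff $u^\B=v^\B$. I would then build a finite witness $\A$ directly, rather than by collapsing $\B$: let $\sigma_i^\A$ consist of the $E_i$-equivalence classes together with three further pairwise-distinct elements not among those classes ($\hat{x}_1,\hat{x}_2,\hat{x}_3$ for $i=1$, and $\hat{y}_1,\hat{y}_2,\hat{y}_3$ for $i=2$); interpret each $v\in\fv{\sigma_i}{\phi}$ by its $E_i$-class, set $x_j^\A=\hat{x}_j$ and $y_j^\A=\hat{y}_j$, and interpret any remaining variables arbitrarily. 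Then $\sigma_i^\A=\fv{\sigma_i}{\wit(\phi)}^\A$ by construction, and $\size{\sigma_1^\A}\ge 3$, $\size{\sigma_2^\A}\ge 3$ with both finite, so the $\Sigma_2$-structure underlying $\A$ satisfies the second disjunct in the definition of $\Ta_{2,3}$; i.e.\ $\A$ is a $\Ta_{2,3}$-interpretation. Finally, since $\Sigma_2$ contains no symbols other than the equalities, the truth of the quantifier-free formula $\wit(\phi)$ depends only on the pattern of equalities among its variables; $\A$ and $\B$ induce the same pattern on $\fv{}{\phi}$, and the atoms $x_j=x_j$, $y_j=y_j$ hold everywhere, so $\A\models\wit(\phi)$ because $\B\models\wit(\phi)$. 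Thus $\A$ is a finite witness of $\wit(\phi)$ for $\Ta_{2,3}$ w.r.t.\ $\set{\sigma_1,\sigma_2}$.

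The main obstacle, and the point where the six fresh variables in $\wit$ earn their keep, is guaranteeing membership of the constructed interpretation in $\Ta_{2,3}$: a naive collapse of $\B$ to the values of the variables of $\phi$ could yield a domain of size $1$ for $\sigma_1$ (or $\sigma_2$), violating both cardinality conditions of $\Ta_{2,3}$. Padding the domains with $x_1,x_2,x_3$ and $y_1,y_2,y_3$, which are constrained only by tautologies and can therefore always be interpreted as three distinct elements, forces $\size{\sigma_i^\A}\ge 3$ while leaving the truth of $\phi$ untouched.
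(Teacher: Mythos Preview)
Your proof is correct and follows essentially the same approach as the paper: both use the same witness $\wit$, handle the equivalence with $\exists\,\ora{w}.\:\wit(\phi)$ trivially, and construct the finite witness by restricting each domain to the values of the variables of $\phi$ (you phrase this via $E_i$-equivalence classes, the paper uses the image set $\fv{\sigma_i}{\phi}^{\B}$, which is isomorphic) and adjoining three fresh elements per sort to serve as the interpretations of $x_1,x_2,x_3$ and $y_1,y_2,y_3$. The key observation---that the three padding variables force $\card{\sigma_i^{\A}}\geq 3$ and hence membership in $\Ta_{2,3}$---is exactly the one the paper makes.
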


\noindent

\begin{lemma}
\label{lem:23nonfw}
$\Ta_{2,3}$ is not strongly finitely witnessable w.r.t.~$\set{\sigma_{1},\sigma_{2}}$.
\end{lemma}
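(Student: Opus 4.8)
The plan is to show that no computable function can serve as a strong witness for $\Ta_{2,3}$ by exhibiting, for every candidate witness $\wit$, a formula $\phi$ and an arrangement $\delta_V$ such that $\wit(\phi)\w\delta_V$ is $\Ta_{2,3}$-satisfiable but has no finite witness. The key structural obstruction is the interaction between the two disjuncts defining $\Ta_{2,3}$: whenever $|\sigma_1^{\Aa}|=2$ we are forced into $|\sigma_2^{\Aa}|\geq\aleph_0$, so a structure that is ``small on $\sigma_1$'' cannot be ``small on $\sigma_2$.'' A finite witness of a formula $\psi$ must have $|\sigma_i^{\Aa}|=|\fv{\sigma_i}{\psi}^{\Aa}|\leq|\fv{\sigma_i}{\psi}|$, i.e.\ its domains are bounded by the number of free variables of each sort. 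So if we can force, via the arrangement, that any model of $\wit(\phi)\w\delta_V$ has exactly two elements of sort $\sigma_1$ (hence is in the first disjunct, hence has an infinite $\sigma_2$-domain), while $\wit(\phi)\w\delta_V$ has only finitely many free variables of sort $\sigma_2$, we obtain the contradiction.

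Concretely, I would proceed as follows. Fix an arbitrary computable witness $\wit$ and take $\phi$ to be a trivial formula such as $x_1=x_1\w x_2=x_2$ with $x_1,x_2$ of sort $\sigma_1$; by the witness-equivalence condition, $\wit(\phi)$ is $\Ta_{2,3}$-equivalent to $\exists\ora{w}.\,\wit(\phi)$, and since $\phi$ is valid in $\Ta_{2,3}$, so is $\exists\ora{w}.\,\wit(\phi)$. Let $x_1,\dots,x_k$ be \emph{all} free variables of sort $\sigma_1$ in $\wit(\phi)$ (there are at least two, namely the original $x_1,x_2$, but only finitely many since $\wit(\phi)$ is a formula). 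Now choose $V=\{x_1,\dots,x_k\}$ and let $\delta_V$ be the arrangement that collapses all of them into exactly two equivalence classes — say $x_1=x_3=\cdots$ in one class and $x_2=x_4=\cdots$ in the other, with $x_1\neq x_2$. Then any $\Ta_{2,3}$-interpretation $\Aa$ satisfying $\wit(\phi)\w\delta_V$ must have $|\sigma_1^{\Aa}|=2$: it is $\geq 2$ by $\delta_V$, and if it were $\geq 3$ the formula would still be satisfiable, so to get the contradiction I instead argue that the \emph{definition of finite witness} forces $|\sigma_1^{\Aa}|\leq k$ and, crucially, a finite witness must realize every domain element by a free variable, so combined with $\delta_V$ the $\sigma_1$-domain of any finite witness has size exactly $2$. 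Being in $\Ta_{2,3}$ with a two-element $\sigma_1$-domain puts us in the first disjunct, forcing $|\sigma_2^{\Aa}|\geq\aleph_0$. But a finite witness requires $\sigma_2^{\Aa}=\fv{\sigma_2}{\wit(\phi)\w\delta_V}^{\Aa}$, a finite set since $\delta_V$ adds no $\sigma_2$-variables and $\wit(\phi)$ has only finitely many — contradiction. Hence $\wit(\phi)\w\delta_V$ is $\Ta_{2,3}$-satisfiable (realized e.g.\ by a structure with $|\sigma_1|=2$, $|\sigma_2|=\aleph_0$) but not finitely witnessed, so $\wit$ is not a strong witness.

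The main obstacle I anticipate is being careful about the order of quantification over witnesses versus the choice of $V$ and $\delta_V$: strong finite witnessability quantifies ``for every arrangement $\delta_V$ of every $V$ whose sorts are in $S$,'' so for a \emph{fixed} witness $\wit$ I get to pick $V$ and $\delta_V$ adversarially \emph{after} seeing $\wit(\phi)$, which is exactly what the argument above needs — but I must make sure $V$ can legitimately be taken to contain variables already free in $\wit(\phi)$ (the definition in the excerpt places no freshness requirement on $V$, so this is fine) and that the resulting $\wit(\phi)\w\delta_V$ is genuinely $\Ta_{2,3}$-satisfiable, which requires exhibiting the explicit $(2,\aleph_0)$ model and checking it satisfies $\delta_V$. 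A secondary subtlety is handling the quantifier-free requirement and the auxiliary variables $\ora{w}$ of $\wit$: one must confirm that enlarging $\sigma_2$ to be infinite does not conflict with any disequalities $\wit$ might have introduced among its own variables, which it cannot, since there are only finitely many of them and an infinite domain accommodates any finite system of equalities and disequalities.
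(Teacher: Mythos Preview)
Your overall strategy is the same as the paper's, and the core of the argument---that a finite witness of $\wit(\phi)\wedge\delta_V$ would have $|\sigma_1|\leq 2$, hence be forced into the $(2,\aleph_0)$ branch of $\Ta_{2,3}$, contradicting finiteness of $\sigma_2$---is correct and matches the paper's reasoning exactly.

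There is, however, a real gap in the step where you assert that $\wit(\phi)\wedge\delta_V$ is $\Ta_{2,3}$-satisfiable. You fix $\delta_V$ \emph{a priori} as the specific two-class arrangement ``$x_1=x_3=\cdots$ and $x_2=x_4=\cdots$''. But $\wit$ is an arbitrary witness function, and nothing stops $\wit(\phi)$ from containing, say, the conjunct $x_3\neq x_1$ (this is perfectly consistent with $\Ta_{2,3}$-equivalence of $\phi$ and $\exists\ora{w}.\wit(\phi)$). In that case your chosen $\delta_V$ makes $\wit(\phi)\wedge\delta_V$ outright unsatisfiable, and the argument collapses: an unsatisfiable formula is trivially strongly finitely witnessed. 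You flag satisfiability as something to check, but your discussion only treats the $\sigma_2$ side (``an infinite domain accommodates any finite system of equalities and disequalities''); the actual obstruction is on the $\sigma_1$ side, where your arrangement may clash with constraints $\wit$ has imposed among its own $\sigma_1$-variables.

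The paper avoids this by reversing the order: it first takes a $\Ta_{2,3}$-interpretation $\Aa$ with $|\sigma_1^{\Aa}|=2$ and $|\sigma_2^{\Aa}|=\aleph_0$, uses the equivalence $\phi\equiv_{\Ta_{2,3}}\exists\ora{w}.\wit(\phi)$ to obtain $\Aa'\models\wit(\phi)$ with the same domains, and only \emph{then} lets $\delta$ be the arrangement on $\fv{}{\wit(\phi)}$ \emph{induced by} $\Aa'$. Satisfiability of $\wit(\phi)\wedge\delta$ is then immediate (witnessed by $\Aa'$), and since $|\sigma_1^{\Aa'}|=2$ the induced arrangement on $\sigma_1$-variables has at most two classes, which is all the rest of the argument needs. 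Your proof is easily repaired by adopting this ``read the arrangement off a model'' step in place of the arbitrary choice.
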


\Cref{lem:ta23smooth,lem:ta23fw,lem:23nonfw} have shown
that $\Ta_{2,3}$ is polite but is not strongly polite.
And indeed, using the polite combination method from \cite{JBLPAR} with this theory
can cause problems.
Consider the theory $\Ta_{1,1}$ that consists of all $\Sigma_2$-structures $\Aa$
such that $\card{\sigma_1^{\Aa}}=\card{\sigma_2^{\Aa}}=1$.
Clearly, $\Ta_{1,1}\oplus \Ta_{2,3}$ is empty, and hence no formula is
$\Ta_{1,1}\oplus\Ta_{2,3}$-satisfiable.
However, denote the formula $true$ by $\Gamma_1$ 
and the formula $x=x$ by $\Gamma_2$ for some variable $x$ of sort $\sigma_{1}$.
Then $\wit(\Gamma_2)$ is 
$x=x
\wedge
\bigwedge_{i=1}^{3}
x_i=x_i\wedge y_i=y_i$.
Let $\delta$ be the arrangement $x=x_1=x_2=x_3\wedge y_1=y_2=y_3$.
It can be shown that 
$\wit(\Gamma_2)\wedge\delta$ is $\Ta_{2,3}$-satisfiable
and $\Gamma_1\wedge \delta$ is $\Ta_{1,1}$-satisfiable.
Hence the combination method of \cite{JBLPAR} would consider
$\Gamma_1\wedge\Gamma_2$ to be $\Ta_{1,1}\oplus\Ta_{2,3}$-satisfiable, which is impossible.
Hence the fact that $\Ta_{2,3}$ is not strongly polite propagates all the way to the polite combination method.\footnote{
Notice that $\Ta_{2,3}$ can be axiomatized using the following set of axioms,
given the definitions in \Cref{fig:cardfor}:
$
\set{\psi_{\geq 2}^{\sigma_{1}},\psi_{\geq 3}^{\sigma_{2}}}\cup\set{\psi_{=2}^{\sigma_{1}}\ra\neg\psi_{=n}^{\sigma_{2}}\mid n\geq 3}
$}

\begin{figure}[t]
\begin{mdframed}
\vspace{-.8em}
\begin{gather*}
\distinct(x_1\til x_n):=\bigwedge_{1\leq i < j<=n}x_i\neq x_j \\
\psi_{\geq n}^{\sigma}:=\exists x_1\til x_n.\distinct(x_1\til x_n) \\
\psi_{\leq n}^{\sigma}:=\exists x_1\til x_n.\forall y.\bigvee_{i=1}^{n}y=x_i\\
\psi_{=n}^{\sigma}:=\psi_{\geq n}^{\sigma}\wedge\psi_{\leq n}^{\sigma} 
\end{gather*}
\vspace{-2em}
\caption{Cardinality formulas for sort $\sigma$. All variables are assumed to have sort $\sigma$.}	
\label{fig:cardfor}
\end{mdframed}
\end{figure}

\begin{remark}
\label{rem:separation}
An alternative way to separate politeness from strong politeness
using $\T_{2,3}$ can be obtained through shiny theories, as follows.
Shiny theories were introduced in \cite{TZ05} for the
mono-sorted case, and were generalized to
many-sorted signatures in two different ways in 
\cite{Casal2018} and \cite{RRZ05}.
In \cite{Casal2018}, $T_{2,3}$ was introduced as a
theory that is
shiny according \cite{RRZ05}, but not
according to \cite{Casal2018}.
Theorem 1 of \cite{Casal2018} states that their
notion of shininess is equivalent to strong politeness
for theories in which the satisfiability problem for
quantifier-free formulas is decidable.
Since this is the case for $T_{2,3}$, and
since it is not shiny according to \cite{Casal2018},
we get that
$T_{2,3}$ is not strongly polite.
Further, Proposition 18 of \cite{RRZ05} states that
every shiny theory (according to their definition) is polite.
Hence we get that $T_{2,3}$ is polite but not strongly polite.

We have (and prefer) a direct proof based only on politeness, 
without a detour through shininess. Note also 
that \cite{Casal2018} dealt only with strongly polite theories and did not study the weaker notion of polite theories.
In particular, the fact that strong politeness is different from politeness was not stated nor proved there.
\end{remark}

\subsection{The Case of Mono-sorted Polite Theories}
\label{sec:singlesort}

Theory $\Ta_{2,3}$ includes two sorts, but is otherwise empty.
In this section we show that requiring two sorts is essential for separating politeness from strong politeness in otherwise empty signatures. That is, we prove that politeness implies strong politeness
otherwise.
Let $\Sigma_0$ be the signature with a single sort 
$\sigma$ and no function or predicate symbols (except $=$),
We show that smooth $\Sigma_0$-theories have a certain form,
and conclude strong politeness from politeness.

\begin{lemma}
\label{lem:smoothatleast}
Let $\Ta$ be a $\Sigma_0$-theory.
If $\Ta$ is smooth w.r.t.~$\sigma$ and includes a finite structure, 
$\Ta$ is axiomatized by $\psi_{\geq n}^{\sigma}$ from \Cref{fig:cardfor}
for some $n>0$.
\end{lemma}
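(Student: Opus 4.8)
The plan is to show that a smooth $\Sigma_0$-theory $\Ta$ containing a finite structure is closed upward in cardinality, and that this closure starts at the least cardinality of a model, which is the finite number $n$ we seek. Since $\Sigma_0$ has one sort and no nonlogical symbols except equality, a $\Sigma_0$-structure is determined up to isomorphism by the cardinality of its domain, and $\Ta$ is therefore completely described by the set $K$ of cardinalities realized by its members. The claim amounts to: $K = \{\kappa : \kappa \geq n\}$ for some finite $n > 0$, which is exactly the class of models of $\psi_{\geq n}^\sigma$.

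First I would let $n$ be the least element of $K$; this exists because $K$ contains a finite cardinal by hypothesis, and $n > 0$ since every structure has a nonempty domain. Next, using smoothness, I would show $K \supseteq \{\kappa : \kappa \geq n\}$: take the finite structure $\Aa$ with $\card{\sigma^\Aa} = n$, which satisfies the quantifier-free formula $\phi = (x = x)$ for a single variable $x$ (or just $\top$); for any cardinal $\kappa \geq n = \card{\sigma^\Aa}$, smoothness (with $\kappa(\sigma) = \kappa$) yields a $\Ta$-interpretation $\Aa'$ satisfying $\phi$ with $\card{\sigma^{\Aa'}} = \kappa$, so $\kappa \in K$. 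For the reverse inclusion $K \subseteq \{\kappa : \kappa \geq n\}$, this is immediate from the minimality of $n$. Hence $K = \{\kappa : \kappa \geq n\}$, and since a $\Sigma_0$-structure satisfies $\psi_{\geq n}^\sigma$ iff its domain has at least $n$ elements, $\Ta$ is precisely the class of models of $\psi_{\geq n}^\sigma$, i.e. $\Ta$ is axiomatized by that sentence.

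The one point that needs a little care — and what I expect to be the main (mild) obstacle — is the justification that for the empty one-sorted signature, a theory is genuinely pinned down by its set of realized domain cardinalities, i.e. that no quantifier-free (or even first-order) sentence can distinguish two $\Sigma_0$-structures of the same cardinality. This is a standard fact (structures over the pure equality signature are classified up to elementary equivalence by their cardinality, finite cardinalities being first-order definable and all infinite ones being elementarily equivalent), but I would state it explicitly as the bridge between "smooth, contains a finite model" and "axiomatized by $\psi_{\geq n}^\sigma$." Everything else is bookkeeping: checking that $\psi_{\geq n}^\sigma$ indeed axiomatizes the upward-closed class, and that smoothness is being applied to a legitimate quantifier-free formula and a legitimate cardinal function $\kappa$.
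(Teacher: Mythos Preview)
Your proposal is correct and follows essentially the same approach as the paper: take $n$ to be the minimal cardinality of a structure in $\Ta$, use smoothness applied to the formula $x=x$ to obtain $\Ta$-structures of every cardinality $\kappa\geq n$, use minimality for the other inclusion, and invoke the fact that $\Sigma_0$-structures are determined (up to isomorphism) by their cardinality to conclude the axiomatization. The paper's proof is terser but structurally identical, and your explicit remark about the ``bridge'' fact is exactly what the paper glosses as ``the lack of any symbols means that the only thing that distinguishes between $\Sigma_0$-structures is their cardinality (modulo isomorphism).''
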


\begin{proposition}
\label{prop:singlesort}
If $\Ta$ is a $\Sigma_0$-theory that is polite w.r.t.~$\sigma$, then 
it is strongly polite w.r.t.~$\sigma$.
\end{proposition}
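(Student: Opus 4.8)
The plan is to reduce the proposition to a case analysis driven by Lemma~\ref{lem:smoothatleast}. Suppose $\Ta$ is polite w.r.t.\ $\sigma$; in particular it is smooth and finitely witnessable, so it has a computable witness $\wit$. I distinguish two cases according to whether $\Ta$ contains a finite structure. If $\Ta$ contains \emph{no} finite structure, then every $\Ta$-interpretation has an infinite domain, so $\Ta$ is in particular stably infinite w.r.t.\ $\sigma$. In that situation any witness function is automatically a strong witness: given a $\Ta$-satisfiable formula $\wit(\phi)\w\delta_V$, the interpretation satisfying it already has an infinite $\sigma$-domain, and one can quotient by the equivalence $\delta_V$ induces and then, using smoothness (or stable infiniteness together with the fact that the finitely many variables name only finitely many elements), pass to an interpretation whose domain is exactly the set of values of the free variables of $\wit(\phi)\w\delta_V$ — wait, that last move needs care, since collapsing to a finite domain is exactly what stable infiniteness forbids. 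So in this first case I instead argue directly that $\Ta$ admits the strong witness $\wit'(\phi) = \wit(\phi)$ is \emph{not} enough; rather I should add fresh variables and use the standard trick: take $\wit'(\phi)$ to be $\wit(\phi)$ together with sufficiently many fresh distinct-from-each-other variables is also wrong because $\Ta$ may not allow arbitrarily many. The clean route is: when $\Ta$ has no finite model, Lemma~\ref{lem:smoothatleast} does not apply, but smoothness plus having only infinite models means $\Ta$ is axiomatized by $\bigwedge_n \psi_{\geq n}^{\sigma}$ (all models infinite) — actually this needs the mono-sorted empty-signature structure of $\Ta$; I would prove the analogue of Lemma~\ref{lem:smoothatleast} for the infinite case, concluding $\Ta$ is exactly the class of all infinite $\Sigma_0$-structures. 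For that theory the witness $\wit'(\phi) = \phi \w w_1\neq w_2 \w w_2 \neq w_3 \w w_1 \neq w_3 \w \cdots$ is not finite either; the right strong witness is simply $\phi$ itself combined with the observation that the infinite-models theory is strongly polite via the argument of Example~\ref{ex:strong} adapted — hmm.

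Let me restart the case split more carefully, since the genuinely new content is the finite case. The key dichotomy is: (i) $\Ta$ has no finite model, or (ii) $\Ta$ has a finite model. In case (ii), Lemma~\ref{lem:smoothatleast} gives that $\Ta$ is axiomatized by $\psi_{\geq n}^{\sigma}$ for some $n > 0$, i.e.\ $\Ta$ is precisely the class of $\Sigma_0$-structures with at least $n$ elements. For each such fixed $n$ I exhibit an explicit strong witness, generalizing Example~\ref{ex:strong}: define $\wit'(\phi) = \phi \w \distinct(w_1,\dots,w_n)$ for fresh $w_1,\dots,w_n$ of sort $\sigma$. First, $\phi$ and $\exists \ora{w}.\,\wit'(\phi)$ are $\Ta$-equivalent, since any $\Ta$-interpretation has at least $n$ elements and can therefore interpret the $w_i$ as distinct witnesses. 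Second, I must check strong finite witnessability: for any set $V$ of $\sigma$-variables and any arrangement $\delta_V$, if $\wit'(\phi)\w\delta_V$ is $\Ta$-satisfiable then it has a finite witness. Take a satisfying $\Ta$-interpretation $\Aa$; the equivalence classes of $\delta_V$ together with the free variables of $\phi$ and the $w_i$ determine a finite subset $D$ of $\sigma^{\Aa}$ that contains at least $n$ distinct elements (because of $\distinct(w_1,\dots,w_n)$ and the fact that $\delta_V$ is consistent with it). Restricting $\Aa$ to $D$ yields a $\Sigma_0$-structure with $\geq n$ elements — hence a $\Ta$-interpretation — that still satisfies $\wit'(\phi)\w\delta_V$ and whose domain is exactly the set of values of the free variables of sort $\sigma$. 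That is a finite witness, so $\wit'(\phi)\w\delta_V$ is finitely witnessed; since $\wit'$ is clearly computable, $\Ta$ is strongly finitely witnessable, and with smoothness it is strongly polite.

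For case (i), where $\Ta$ has only infinite models: I would first establish (an easy analogue of Lemma~\ref{lem:smoothatleast}, or a direct L\"owenheim--Skolem-style argument over the empty signature) that $\Ta$ is then the class of \emph{all} infinite $\Sigma_0$-structures, or more modestly, that $\Ta$ is stably infinite w.r.t.\ $\sigma$. Given a computable witness $\wit$ for $\Ta$, I claim $\wit$ is already a strong witness: for any $V$ and arrangement $\delta_V$, if $\wit(\phi)\w\delta_V$ is $\Ta$-satisfiable it is in particular satisfiable, and — being over a signature with only $=$ and finitely many variables — it is satisfiable in a $\Ta$-interpretation, hence finitely witnessed vacuously? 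No: the definition of ``finitely witnessed'' requires $\sigma^{\Aa} = \fv{\sigma}{\wit(\phi)\w\delta_V}^{\Aa}$, a finite domain, which no $\Ta$-interpretation has. So in case (i) I instead replace $\wit$ with $\wit'(\phi) := \wit(\phi)$ together with the strong-politeness argument applied to the theory of all infinite structures, which is handled exactly as in~\cite{JBLPAR}: such a theory is \emph{not} finitely witnessable at all, contradiction with $\Ta$ being polite. Therefore case (i) is \emph{vacuous} — a polite (in particular finitely witnessable) $\Sigma_0$-theory must contain a finite structure — and only case (ii) occurs. The main obstacle is exactly this point: arguing cleanly that finite witnessability forces the existence of a finite model in the mono-sorted empty-signature setting (so that Lemma~\ref{lem:smoothatleast} is applicable), and then verifying that the restriction-to-a-finite-subdomain construction in case (ii) genuinely produces a \emph{finite witness} (domain equal to the free-variable valuations) while still landing inside $\Ta$, which is where the lower bound $n$ from Lemma~\ref{lem:smoothatleast} and the $\distinct(w_1,\dots,w_n)$ conjunct do the work.
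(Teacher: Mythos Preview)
Your approach is essentially the same as the paper's, and your case (ii) argument---defining $\wit'(\phi)=\phi\wedge\distinct(w_1,\dots,w_n)$ once Lemma~\ref{lem:smoothatleast} gives the bound $n$, and restricting a satisfying interpretation of $\wit'(\phi)\wedge\delta_V$ to the values of its free variables---is exactly what the paper does.

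The one point you flag as ``the main obstacle'' (ruling out case (i)) is dispatched by the paper in a single line, and you should simply fill it in rather than leave it as an open worry: the formula $x=x$ is $\Ta$-satisfiable, so $\wit(x=x)$ is $\Ta$-satisfiable and therefore has a finite witness for $\Ta$ w.r.t.\ $\sigma$; that finite witness \emph{is} a $\Ta$-interpretation with finite domain, so $\Ta$ contains a finite structure. No case split is needed---Lemma~\ref{lem:smoothatleast} applies unconditionally once $\Ta$ is polite. Your eventual conclusion that case (i) is vacuous is correct, but the argument is this one-liner, not an appeal to results from~\cite{JBLPAR} about the theory of all infinite structures.
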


\begin{remark}
We again note (as we did in \Cref{rem:separation}) that an alternative way to obtain this result
is via shiny theories,
using~\cite{RRZ05}, which introduced polite theories,
as well as~\cite{10.1007/978-3-642-45221-5_15}, which compared strongly polite theories
to shiny theories in the mono-sorted case.
Specifically, in the presence of a single sort, Proposition 19 of \cite{RRZ05}
states that: 

\noindent
\begin{tabular}{p{2em}p{30em}}
$(\ast)$ & every polite theory
over a finite signature such that it is decidable 
whether a finite structure is a member of the theory, is shiny.
\end{tabular}

\noindent
In turn, Proposition 1 of~\cite{10.1007/978-3-642-45221-5_15}
states that:

\noindent
\begin{tabular}{p{2em}p{30em}}
$(\ast\ast)$ & every shiny theory over a mono-sorted signature with a decidable 
satisfiability problem for quantifier-free formulas, is also strongly polite.
\end{tabular}

\noindent
It can be shown that for every polite $\Sigma_0$-theory
it is decidable whether a finite structure is in the theory. It can also be shown that satisfiability of quantifier-free formulas is decidable for such theories. Using $(\ast)$ and $(\ast\ast)$, we get that in
$\Sigma_0$-theories, politeness implies strong politeness.
Similarly to \Cref{rem:separation}, we prefer a direct route for showing this result, without going through shiny theories.
\end{remark}

\subsection{Mono-sorted Finite witnessability}
\label{sec:singlesortfin}
We have seen that for $\Sigma_0$-theories,
politeness and strong politeness are the same.
Now we show that smoothness is crucial for this equivalence, i.e.,
that there is no such equivalence between finite witnessability
and strong finite witnessability.
%
%
Let $\TaEven$ be the $\Sigma_0$-theory of all $\Sigma_0$-structures
$\Aa$ such that $\card{\sigma^{\Aa}}$ is even or
infinite.\footnote{Notice that $\TaEven$ can be axiomatized using the set 
$\set{\neg\psi_{=2n+1}^{\sigma}\mid n\in\mathbb{N}}$.}
Clearly, this theory is not smooth.
\begin{lemma}
\label{lem:notsmooth}
$\TaEven$ is not smooth w.r.t.~$\sigma$.
\end{lemma}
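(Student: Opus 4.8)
The plan is to falsify the defining condition of smoothness directly by exhibiting a concrete counterexample: a satisfiable quantifier-free formula, a model of it in $\TaEven$, and a target cardinal above the model's domain size that cannot be realized by any $\TaEven$-interpretation of that formula. Since $\TaEven$ contains structures of every even finite size but none of odd finite size, the obstruction to smoothness is precisely the gap between an even cardinality and the next (odd) one.

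Concretely, I would take $\phi$ to be the trivially satisfiable formula $x = x$ for a variable $x$ of sort $\sigma$. First I would pick a $\TaEven$-interpretation $\Aa$ with $\card{\sigma^{\Aa}} = 2$; such an interpretation exists because $2$ is even, and clearly $\Aa \models \phi$. Next I would set $\kappa(\sigma) = 3$, which satisfies $\kappa(\sigma) \geq \card{\sigma^{\Aa}}$. If $\TaEven$ were smooth w.r.t.\ $\sigma$, there would be a $\TaEven$-interpretation $\Aa'$ with $\Aa' \models \phi$ and $\card{\sigma^{\Aa'}} = 3$. But $3$ is finite and odd, so no structure of $\TaEven$ has a domain of cardinality $3$; this contradiction establishes the lemma.

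There is essentially no hard step here: the only thing to be careful about is that smoothness is a universally quantified statement whose hypothesis (``$\Aa$ that satisfies $\phi$'') must be non-vacuously satisfiable in $\TaEven$, which is why I choose $\phi \equiv x = x$ together with a size-$2$ model rather than, say, a formula forcing a specific larger cardinality. One could equally use $\phi \equiv x_1 \neq x_2$ with any even model and the same odd target cardinal; the argument is identical.
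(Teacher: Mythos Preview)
Your proof is correct and essentially identical to the paper's own argument: the paper also takes $\phi \equiv x=x$, a $\TaEven$-interpretation with domain $\{1,2\}$, and $\kappa(\sigma)=3$, then observes that no $\TaEven$-interpretation has a domain of cardinality $3$.
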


\newcommand{\even}{\mathit{even}}
We can construct a witness $\wit$ for $\TaEven$ as follows.
Let $\phi$ be a quantifier-free $\Sigma_0$-formula, and 
let
$E$ be the set of all equivalence relations over 
$\fv{}{\phi}\cup\{w\}$ for some fresh variable $w$.  
Let $\even(E)$ be the set of all equivalence relations in $E$ with an even number of equivalence classes.
Then, $\wit(\phi)$ is
$\phi\wedge\bigvee_{e\in \even(E)}\delta_e$, where for each
$e\in \even(E)$, $\delta_e$ is the arrangement induced by $e$:
\begin{equation*}
\bigwedge_{(x,y)\in e}x=y\ \wedge\bigwedge_{x,y\in
\fv{}{\phi}\cup\set{w}\wedge(x,y)\not\in e}x\neq y
\end{equation*}
It can be shown that $\wit$ is indeed a witness, and that $\TaEven$ has no strong witness, similarly to \Cref{lem:23nonfw}.
\begin{lemma}
\label{thm:weakeven}
$\TaEven$ is finitely witnessable w.r.t.~$\sigma$.
\end{lemma}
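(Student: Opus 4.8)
The plan is to verify that the function $\wit$ defined just above satisfies the two conditions in the definition of a witness for $\TaEven$ w.r.t.~$\sigma$, and then to exhibit a computable such witness (the given $\wit$ is already computable, since $E$ and $\even(E)$ are finite and effectively constructed from $\phi$). Write $\ora{w} = \fv{}{\wit(\phi)}\setminus\fv{}{\phi} = \{w\}$. For the first condition, I must show $\phi$ and $\exists w.\,\wit(\phi)$ are $\TaEven$-equivalent. One direction is immediate: $\wit(\phi)\models\phi$ syntactically, so $\exists w.\,\wit(\phi)\models\phi$. For the converse, suppose $\Aa\models_{\TaEven}\phi$; I need to extend $\Aa$ with an interpretation of $w$ so that some $\delta_e$ with $e\in\even(E)$ holds. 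The value of $w^{\Aa}$ must be chosen as one of the elements $x^{\Aa}$ for $x\in\fv{}{\phi}$ (or, if $\fv{}{\phi}=\emptyset$, as any element). The induced equivalence relation $e$ on $\fv{}{\phi}\cup\{w\}$ is then determined, and $\Aa\models\delta_e$. The point is that adding $w$ and identifying it with an existing variable does \emph{not} change the number of equivalence classes, so $e$ has the same number of classes as the partition of $\fv{}{\phi}$ induced by $\Aa$ — but that is \emph{not} automatically even. This is the crux: if $\fv{}{\phi}^{\Aa}$ has an odd number of distinct values, no choice of $w^{\Aa}$ among them yields an even $e$. The resolution is that, because $\Aa$ is a $\TaEven$-interpretation, $\sigma^{\Aa}$ is even or infinite, hence strictly larger than any odd finite set of named values; so there is an element of $\sigma^{\Aa}$ distinct from all of $\fv{}{\phi}^{\Aa}$, and setting $w^{\Aa}$ to such an element makes $e$ have one more class than before — turning an odd count into an even one. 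When the count is already even, pick $w^{\Aa}$ equal to some existing value. Either way we land in $\even(E)$, so $\Aa\models\exists w.\,\wit(\phi)$.

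For the second condition, I must show $\wit(\phi)$ is finitely witnessed for $\TaEven$ w.r.t.~$\sigma$: either it is $\TaEven$-unsatisfiable, or it has a finite witness $\Ba$ with $\sigma^{\Ba}=\fv{\sigma}{\wit(\phi)}^{\Ba}$. Assume it is $\TaEven$-satisfiable, say $\Ba_0\models_{\TaEven}\wit(\phi)$. Then $\Ba_0\models\delta_e$ for some $e\in\even(E)$, so $\fv{}{\wit(\phi)}^{\Ba_0}$ has an even number of distinct elements $k=|e|$. Now build $\Ba$ by restricting the domain to exactly $\fv{}{\wit(\phi)}^{\Ba_0}$ (of size $k$), keeping variable interpretations unchanged; since $\Sigma_0$ has no function or predicate symbols besides equality, this restriction is a legitimate $\Sigma_0$-structure, it still satisfies $\wit(\phi)$ (quantifier-free, so truth depends only on the named elements), and $|\sigma^{\Ba}| = k$ is even, so $\Ba$ is a $\TaEven$-interpretation. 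By construction $\sigma^{\Ba} = \fv{\sigma}{\wit(\phi)}^{\Ba}$, so $\Ba$ is a finite witness.

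The main obstacle is the parity-matching argument in the first condition: one must be careful that the fresh variable $w$ gives exactly the one extra degree of freedom needed to adjust parity, and that this adjustment is always possible for a $\TaEven$-interpretation precisely because such interpretations are never of odd finite cardinality (so there is always a "spare" element to assign to $w$ when the current count is odd). Everything else — computability of $\wit$, the $\models$-direction $\wit(\phi)\models\phi$, and the domain-restriction step — is routine given that the signature is empty. I would also remark, in parallel with the discussion preceding \Cref{lem:23nonfw}, that $\TaEven$ has \emph{no} strong witness: for any candidate strong witness $\wit'$, taking $\phi=(x_1=x_1)$ and the arrangement $\delta_V$ over a suitable set $V$ of fresh $\sigma$-variables forcing an odd number of equivalence classes among $\fv{\sigma}{\wit'(\phi)}\cup V$ yields $\wit'(\phi)\w\delta_V$ that is $\TaEven$-satisfiable but admits no finite witness, since a finite witness would be forced to have odd cardinality and hence not be a $\TaEven$-interpretation.
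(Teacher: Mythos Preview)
Your proof is correct and follows essentially the same approach as the paper: the parity-adjustment argument for the equivalence condition (choose $w^{\Aa}$ to be a fresh element when $\card{\fv{}{\phi}^{\Aa}}$ is odd, and equal to an existing value when it is even, using that $\TaEven$-structures have even or infinite cardinality) and the domain-restriction argument for the finite-witness condition are exactly what the paper does. Your closing remark about the absence of a strong witness also anticipates the paper's proof of the subsequent lemma.
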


\begin{lemma}
\label{thm:weaknotstrong}
$\TaEven$ is not strongly finitely witnessable w.r.t.~$\sigma$.
\end{lemma}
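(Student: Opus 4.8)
The plan is to prove the stronger statement that $\TaEven$ has \emph{no} strong witness w.r.t.~$\sigma$ at all (hence, in particular, no computable one), which immediately yields the lemma. Assume for contradiction that $\wit$ is a strong witness for $\TaEven$ w.r.t.~$\sigma$ and apply it to the formula $true$. Put $\psi := \wit(true)$ and let $w_1,\dots,w_n$ be its free variables; they all have sort $\sigma$, and, since $\fv{}{true}=\emptyset$, they are all fresh. By the first clause in the definition of a witness, $\exists\,\ora w.\,\psi$ is $\TaEven$-equivalent to $true$, hence $\TaEven$-satisfiable, and therefore $\psi$ is $\TaEven$-satisfiable.

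The fact I would use is that, because $\Sigma_0$ is empty, the truth value of a quantifier-free $\Sigma_0$-formula in an interpretation depends only on the equivalence relation that interpretation induces on the finitely many variables occurring in it. So fix a $\TaEven$-interpretation $\Aa\models\psi$, let $E$ be the equivalence relation it induces on $\{w_1,\dots,w_n\}$ (that is, $w_i\mathrel{E}w_j$ iff $w_i^{\Aa}=w_j^{\Aa}$), and let $k$ be its number of classes; then $\psi$ holds in every interpretation in which the $w_i$ realize exactly the equalities prescribed by $E$.

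Now I would split on the parity of $k$. If $k$ is odd, take $V=\{w_1,\dots,w_n\}$ and let $\delta_E$ be the arrangement of $V$ induced by $E$. Then $\psi\wedge\delta_E$ is $\TaEven$-satisfiable, since it is satisfied in a $\TaEven$-interpretation whose $\sigma$-domain has $k+1$ elements (an even number). Yet any finite witness $\Ba$ of $\psi\wedge\delta_E$ for $\TaEven$ w.r.t.~$\sigma$ is a $\TaEven$-interpretation with $\sigma^{\Ba}=\{w_1^{\Ba},\dots,w_n^{\Ba}\}$, a set of exactly $k$ elements by $\delta_E$; so $\card{\sigma^{\Ba}}$ is finite and odd, which is impossible in $\TaEven$. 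If $k$ is even, I would instead enlarge the variable set --- which the definition of ``strongly finitely witnessed'' permits, since there $V$ ranges over all finite sets of variables of the relevant sorts, not merely those occurring in $\wit(true)$ --- by one fresh variable $v$ of sort $\sigma$, and take $\delta$ to be the arrangement of $V=\{w_1,\dots,w_n,v\}$ that agrees with $E$ on the $w_i$ and puts $v$ in a class of its own. Then $\psi\wedge\delta$ is $\TaEven$-satisfiable (satisfied in a $\TaEven$-interpretation with a $(k+2)$-element $\sigma$-domain) but every finite witness has $\sigma$-domain of size exactly $k+1$, again finite and odd. Either way we exhibit an arrangement $\delta_V$ of a set $V$ for which $\psi\wedge\delta_V$ is $\TaEven$-satisfiable but has no finite witness for $\TaEven$ w.r.t.~$\sigma$; hence $\psi\wedge\delta_V$ is not finitely witnessed and $\wit(true)$ is not strongly finitely witnessed, contradicting that $\wit$ is a strong witness.

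I expect the even case to be the crux, and it is exactly what distinguishes this argument from the proof of \Cref{lem:23nonfw}: a witness function is free to fold the ``even or infinite cardinality'' requirement directly into $\wit(true)$ --- as the witness we constructed for \Cref{thm:weakeven} does --- so that no arrangement over the variables already present in $\wit(true)$ can force a model of odd cardinality. The remedy, adding one fresh variable to $V$ to flip the parity of the number of classes, is the one genuinely new idea; checking that the small structures exhibited above really are $\TaEven$-interpretations and that finite witnesses are pinned to the stated odd cardinalities is routine.
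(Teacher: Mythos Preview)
Your proof is correct and follows essentially the same approach as the paper's: show that for any candidate witness $\wit$, one can choose an arrangement (possibly enlarged by a fresh variable) so that the number of equivalence classes is odd, which forces any finite witness to have odd domain size and hence lie outside $\TaEven$. The only cosmetic differences are that the paper applies $\wit$ to $x=x$ rather than $true$ and, instead of splitting on the parity of $k$, fixes a $2$-element model so that the induced arrangement has at most two classes and then always adjoins a fresh variable $v$ (equal to everything when $k=1$, distinct from everything when $k=2$) to make the class count odd; your parity-based case split is a mild generalization of the same maneuver, and your explicit verification that $\psi\wedge\delta$ is $\TaEven$-satisfiable is a detail the paper leaves implicit.
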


\section{A Blend of Polite and Stably-Infinite Theories}
\label{sec:polite_nelson_oppen}

In this section,
we show that the polite combination method can be optimized
to reduce the search space of possible arrangements.
In what follows,
$\Sigma_1$ and $\Sigma_2$ are disjoint signatures,
$S=\sorts{\Sigma_1}\cap\sorts{\Sigma_{2}}$,
$\Ta_1$ is a $\Sigma_1$-theory,
$\Ta_2$ is a $\Sigma_2$-theory,
$\Gamma_1$ is a set of $\Sigma_1$-literals, and
$\Gamma_2$ is a set of $\Sigma_2$-literals.

The Nelson-Oppen procedure reduces 
the $\Ta_1\oplus\Ta_2$-satisfiability of
$\Gamma_1\cup\Gamma_2$ 
to the 
existence of an arrangement $\delta$ over 
the set 
$V=\fv{S}{\Gamma_{1}}\cap\fv{S}{\Gamma_{2}}$, such that
$\Gamma_{1}\cup\delta$ is $\Ta_1$-satisfiable and
$\Gamma_{2}\cup\delta$ is $\Ta_2$-satisfiable.
The correctness of this reduction relies on the fact that both theories are stably infinite w.r.t.~$S$.
In contrast, the polite combination method only requires a condition (namely strong politeness) from one of the theories, 
while the other theory is unrestricted and, in particular, not necessarily stably infinite.
In polite combination, the
$\Ta_1\oplus\Ta_2$-satisfiability of $\Gamma_1\cup\Gamma_2$
is again reduced to the existence of an arrangement $\delta$, but over a different set
$V'=\fv{S}{\wit(\Gamma_{2})}$, such that
$\Gamma_{1}\cup\delta$ is $\Ta_1$-satisfiable and
$\wit(\Gamma_{2})\cup\delta$ is $\Ta_2$-satisfiable, where
$\wit$ is a strong witness for $\Ta_2$ w.r.t.~$S$.
Thus, the flexibility offered by polite combination comes with a price. 
The set $V'$ is potentially larger than $V$ as it contains \emph{all} variables with sorts in $S$ that occur in $\wit(\Gamma_2)$, not just those that also occur in $\Gamma_1$.  
Since the search space of arrangements over a set grows exponentially with its size, this difference can become crucial.
If $\Ta_1$ happens to be stably infinite w.r.t.~$S$, however,
we can fall back to Nelson-Oppen combination and only consider variables that are shared 
by the two sets.
But what if $\Ta_1$ is stably infinite only w.r.t.~to some proper subset $S'\subset S$? Can this knowledge about $\Ta_1$ 
help in finding some set $V''$ of variables between $V$ and $V'$, such that we need only consider arrangements of $V''$?
In this section we prove that this is possible by taking
$V''$ to include only the variables of sorts in $S'$ that are shared between $\Gamma_1$ and $\wit(\Gamma_2)$, and
all the variables of sorts in $S\setminus S'$ that occur
in $\wit(\Gamma_2)$.
We also identify several weaker conditions on $\Ta_2$ that are sufficient for the combination theorem to hold.

\subsection{Refined Combination Theorem}

To put the discussion above in formal terms, we recall the following theorem.

\begin{theorem}[\cite{JBLPAR}]
\label{thm:original}
If $\Ta_2$ is
strongly polite w.r.t.~$S$ with a witness
 $\wit$,
then the following are equivalent:
\begin{itemize*}
    \item[1.] $\Gamma_1\cup \Gamma_2$ is $(T_1\oplus T_2)$-satisfiable;
    \item[2.] there exists an arrangement $\delta_V$ over $V$, such that $\Gamma_1\cup \delta_V$ is $\Ta_1$-satisfiable and $\wit(\Gamma_2)\cup \delta_V$ is $\Ta_2$-satisfiable,
\end{itemize*}
where $V=\bigcup_{\sigma\in S}V_{\sigma}$, and
$V_{\sigma}=\fv{\sigma}{\wit(\Gamma_2)}$ for each $\sigma\in S$.
\end{theorem}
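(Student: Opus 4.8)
The plan is to prove the two implications of the equivalence separately, using \Cref{oldbutgood} together with the two ingredients of strong politeness for $\Ta_2$ --- smoothness and strong finite witnessability. The direction from~1 to~2 is a routine model-extension argument: suppose $\Gamma_1\cup\Gamma_2$ is satisfied by a $(\Ta_1\oplus\Ta_2)$-interpretation $\Aa$. Since $\wit$ is a witness for $\Ta_2$ w.r.t.\ $S$, $\Gamma_2$ is $\Ta_2$-equivalent to $\exists\,\ora{w}.\,\wit(\Gamma_2)$ with $\ora{w}=\fv{}{\wit(\Gamma_2)}\setminus\fv{}{\Gamma_2}$; taking $\ora w$ also fresh for $\Gamma_1$ (harmless, by renaming), I would extend the reduct $\Aa^{\Sigma_2}$ on $\ora w$ to a $\Ta_2$-interpretation satisfying $\wit(\Gamma_2)$, obtaining a $(\Ta_1\oplus\Ta_2)$-interpretation $\Aa'$ satisfying $\Gamma_1\cup\wit(\Gamma_2)$. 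Letting $\delta_V$ be the arrangement of $V$ induced by $\Aa'$ (its component on each $V_\sigma$ identifies $x,y$ exactly when $x^{\Aa'}=y^{\Aa'}$), and noting that the sorts of $V$ lie in $S$, so that $\delta_V$ is simultaneously a $\Sigma_1$- and a $\Sigma_2$-formula, one gets ${\Aa'}^{\Sigma_1}\models\Gamma_1\cup\delta_V$ and ${\Aa'}^{\Sigma_2}\models\wit(\Gamma_2)\cup\delta_V$, which is~2.

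For the direction from~2 to~1, fix an arrangement $\delta_V$ of $V$, a $\Ta_1$-interpretation $\Aa\models\Gamma_1\cup\delta_V$, and a $\Ta_2$-interpretation witnessing that $\wit(\Gamma_2)\cup\delta_V$ is $\Ta_2$-satisfiable. Since $\wit$ is a \emph{strong} witness, $\wit(\Gamma_2)$ is strongly finitely witnessed for $\Ta_2$ w.r.t.\ $S$; applying this to the arrangement $\delta_V$ (whose variables all have sorts in $S$), and using that $\wit(\Gamma_2)\wedge\delta_V$ is $\Ta_2$-satisfiable, I obtain a finite witness $\Ba'$ of $\wit(\Gamma_2)\wedge\delta_V$ for $\Ta_2$ w.r.t.\ $S$. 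The crucial point is that every variable of $\delta_V$ already occurs in $\wit(\Gamma_2)$, so $\fv{\sigma}{\wit(\Gamma_2)\wedge\delta_V}=V_\sigma$ and hence $\sigma^{\Ba'}=V_\sigma^{\Ba'}$ for every $\sigma\in S$; since $\Ba'\models\delta_V$, the cardinal $\card{\sigma^{\Ba'}}$ is exactly the number of equivalence classes of the component of $\delta_V$ on $V_\sigma$, which is the same number as $\card{V_\sigma^{\Aa}}$ (using $\Aa\models\delta_V$). In particular $\card{\sigma^{\Ba'}}=\card{V_\sigma^{\Aa}}\le\card{\sigma^{\Aa}}$ for every $\sigma\in S$.

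I would then invoke smoothness of $\Ta_2$ w.r.t.\ $S$ on the quantifier-free formula $\wit(\Gamma_2)\wedge\delta_V$, the interpretation $\Ba'$, and the cardinal function $\kappa(\sigma)=\card{\sigma^{\Aa}}$ (optionally first shrinking $\Aa$ via \Cref{thm:ordered-sl} to keep these cardinals countable), obtaining a $\Ta_2$-interpretation $\Ba''\models\wit(\Gamma_2)\cup\delta_V$ with $\card{\sigma^{\Ba''}}=\card{\sigma^{\Aa}}$ for all $\sigma\in S$. Finally, I would apply \Cref{oldbutgood} to the literal sets $\Gamma_1$ and $\wit(\Gamma_2)$, the interpretations $\Aa$ and $\Ba''$, and the arrangement $\delta_W$ obtained from $\delta_V$ by restriction to $W=\fv{}{\Gamma_1}\cap\fv{}{\wit(\Gamma_2)}$: since $W\subseteq V$, $\delta_V$ propositionally entails $\delta_W$, so $\Aa\models\Gamma_1\cup\delta_W$ and $\Ba''\models\wit(\Gamma_2)\cup\delta_W$, while the cardinality hypothesis of \Cref{oldbutgood} holds by construction. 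That theorem yields that $\Gamma_1\cup\wit(\Gamma_2)$ is $(\Ta_1\oplus\Ta_2)$-satisfiable, and since any $\Ta_2$-interpretation satisfying $\wit(\Gamma_2)$ also satisfies $\exists\,\ora w.\,\wit(\Gamma_2)$, hence $\Gamma_2$, this shows $\Gamma_1\cup\Gamma_2$ is $(\Ta_1\oplus\Ta_2)$-satisfiable.

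The step I expect to be the main obstacle is the cardinality bookkeeping in the direction from~2 to~1: one must use the defining property of a finite witness --- that each domain $\sigma^{\Ba'}$ equals the image of the sort-$\sigma$ variables of the formula --- to conclude that $\delta_V$ by itself pins down $\card{\sigma^{\Ba'}}$, that this matches $\card{V_\sigma^{\Aa}}$, and that the only discrepancy with $\card{\sigma^{\Aa}}$ is the ``slack'' $V_\sigma^{\Aa}\subseteq\sigma^{\Aa}$ that smoothness of $\Ta_2$ is designed to absorb. Keeping straight the interplay between the large arrangement $\delta_V$, which is what triggers strong finite witnessability, and its restriction $\delta_W$, which is what \Cref{oldbutgood} requires, is the other delicate point.
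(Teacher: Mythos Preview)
Your proposal is correct and follows essentially the same route as the paper. Note that the paper does not prove \Cref{thm:original} directly (it is cited from~\cite{JBLPAR}); however, the paper's proof of the generalization \Cref{yoniafterclark} specializes, with $\ssi=\emptyset$ and $\snsi=S$, to exactly your argument: strong finite witnessability shrinks the $\Ta_2$-model so that $\card{\sigma^{\Ba'}}=\card{V_\sigma^{\Aa}}\le\card{\sigma^{\Aa}}$, smoothness inflates it back to match $\card{\sigma^{\Aa}}$, and then \Cref{oldbutgood} glues the two models. The only cosmetic difference is in the invocation of \Cref{oldbutgood}: you restrict $\delta_V$ to $\delta_W$ on the genuinely shared variables $W=\fv{}{\Gamma_1}\cap\fv{}{\wit(\Gamma_2)}$, whereas the paper instead enlarges the literal sets to $\Gamma_1':=\Gamma_1\cup\delta_V$ and $\Gamma_2':=\wit(\Gamma_2)\cup\delta_V$ so that $V$ itself becomes the shared-variable set; both maneuvers are valid and equivalent.
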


Our goal is to identify general cases in which information regarding $\Ta_1$
can help reduce the size of the set $V$.
We extend the definitions of stably infinite, smooth, and strongly finitely witnessable
to two sets of sorts rather than one. 
Roughly speaking, in this extension, the usual definition is taken for the first set, and some cardinality-preserving constraints are enforced on the second set.

\begin{definition}
\label{newDefs}
Let $\Sigma$ be a signature,
$S_1,S_2$ two disjoint subsets of 
$\sorts{\Sigma}$, and
$\Ta$ a $\Sigma$-theory.

$\Ta$ is {\em (strongly) stably infinite w.r.t.~$(S_1,S_2)$} if for every 
quantifier-free $\Sigma$-formula $\phi$
and $\Ta$-interpretation $\Aa$ satisfying $\phi$,
there exists a $\Ta$-interpretation $\Ba$
such that
$\Ba\models\phi$,
$|\sigma^{\Ba}|$ is infinite for every $\sigma\in S_1$, and
$|\sigma^{\Ba}|\leq |\sigma^{\Aa}|$
($|\sigma^{\Ba}| = |\sigma^{\Aa}|$) for every $\sigma\in S_2$.

$\Ta$ is {\em smooth w.r.t.~$(S_1,S_2)$} if
for every quantifier-free $\Sigma$-formula $\phi$,
$\Ta$-interpretation $\Aa$ satisfying $\phi$, and function
$\kappa$ from $S_1$ to the class of cardinals such that $\kappa(\sigma)\geq\card{\sigma^{A}}$ for each $\sigma\in S_1$,
there exists a $\Ta$-interpretation $\Ba$ that satisfies $\phi$, with
$\card{\sigma^{B}}=\kappa(\sigma)$ for each $\sigma\in S_1$,
and with $\card{\sigma^{B}}$ infinite whenever $\card{\sigma^{\Aa}}$ is infinite
for each $\sigma\in S_2$.

$\Ta$ is {\em strongly finitely witnessable w.r.t.~$(S_1, S_2)$} if
there exists a computable function $\wit: QF(\Sigma)\rightarrow QF(\Sigma)$
such that
for every quantifier-free $\Sigma$-formula $\phi$:
\begin{enumerate*}
\item $\phi$ and $\exists\,\ora{w}.\:\wit(\phi)$ are $\Ta$-equivalent for $\ora{w}=\fv{}{\wit(\phi)}\setminus\fv{}{\phi}$; and
\item for every $\Ta$-interpretation $\Aa$ and arrangement
$\delta$ of any set of variables whose sorts are in $S_1$, 
if $\Aa$ satisfies $\wit(\phi)\wedge\delta$, then
there exists a $\Ta$-interpretation $\Ba$ that
finitely witnesses $\wit(\phi)\wedge\delta$ w.r.t.~$S_1$ and 
for which $\card{\sigma^{\Ba}}$ is infinite whenever $\card{\sigma^{\Aa}}$ is
infinite, for each $\sigma\in S_2$.
\end{enumerate*}
\end{definition}

\noindent
Our main result is the following.

\begin{theorem}
\label{yoniafterclark}
Let $\ssi\subseteq S$ and $\snsi=S\setminus \ssi$.
Suppose $\Ta_1$ is stably infinite w.r.t.~$\ssi$
and one of the following holds:
\begin{enumerate}
\item\label{item:strongsi} $\Ta_2$ is strongly stably infinite w.r.t.~$(\ssi, \snsi)$
and strongly polite w.r.t.~$\snsi$ with a witness $\wit$.
\item\label{item:mediumsi} $\Ta_2$ is stably infinite w.r.t.~$(\ssi, \snsi)$,
smooth w.r.t.~$(\snsi,\ssi)$, and strongly finitely witnessable w.r.t.~$\snsi$ with a witness $\wit$.
\item\label{item:weaksi} $\Ta_2$ is stably infinite w.r.t.~$\ssi$ and
both smooth and strongly finitely-witnessable w.r.t.~$(\snsi, \ssi)$ with a witness $\wit$.
\end{enumerate}
Then the following are equivalent:
\begin{itemize*}
    \item[1.] $\Gamma_1\cup \Gamma_2$ is $(\T_1\oplus \T_2)$-satisfiable;
    \item[2.] There exists an arrangement $\delta_V$ over $V$ such that $\Gamma_1\cup \delta_V$ is $\Ta_1$-satisfiable, and $\wit(\Gamma_2)\cup \delta_V$ is $\Ta_2$-satisfiable,
\end{itemize*}
where $V=\bigcup_{\sigma\in S}V_{\sigma}$, with
$V_{\sigma}=\fv{\sigma}{\wit(\Gamma_2)}$ for every $\sigma\in \snsi$
and $V_{\sigma}=\fv{\sigma}{\Gamma_1}\cap\fv{\sigma}{\wit(\Gamma_2)}$ for every $\sigma\in \ssi$.

\end{theorem}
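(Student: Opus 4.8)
The plan is to reduce this refined combination theorem to the existing Theorem~\ref{oldbutgood} (Theorem 2.5 of~\cite{JBLPAR}), exactly as the original polite combination theorem (Theorem~\ref{thm:original}) is proven. The direction $2 \Rightarrow 1$ is the easy one: given the arrangement $\delta_V$ over $V$ together with a $\Ta_1$-interpretation $\Aa$ satisfying $\Gamma_1 \cup \delta_V$ and a $\Ta_2$-interpretation $\Ba$ satisfying $\wit(\Gamma_2) \cup \delta_V$, I want to massage $\Aa$ and $\Ba$ into interpretations that (i) still satisfy the relevant formulas, (ii) agree on the cardinalities of \emph{all} sorts in $S$, and (iii) agree on a common arrangement of the full shared variable set $\fv{}{\Gamma_1} \cap \fv{}{\Gamma_2}$, so that Theorem~\ref{oldbutgood} applies and yields $\Ta_1 \oplus \Ta_2$-satisfiability of $\Gamma_1 \cup \Gamma_2$ (after noting $\Gamma_2$ and $\exists \ora{w}.\wit(\Gamma_2)$ are $\Ta_2$-equivalent, so a model of $\wit(\Gamma_2)$ is in particular a model of $\Gamma_2$).

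The core of the argument is arranging condition (ii), and this is where the three hypotheses on $\Ta_2$ and the stable-infiniteness of $\Ta_1$ come into play. For the sorts in $\snsi$ the situation is the same as in classical polite combination: because $V$ contains \emph{all} variables of $\wit(\Gamma_2)$ of those sorts, $\Ba$ restricted to $\snsi$ is a finite witness, $\card{\sigma^{\Ba}} = \card{V_\sigma^{\Ba}}$ is finite and determined by $\delta_V$, and $\Aa$ must already have at least that many elements in each such sort (since $\Aa \models \delta_V$ forces the disequalities among the $V_\sigma$). So I can use smoothness/finite witnessability of $\Ta_2$ w.r.t.\ $\snsi$ to match $\Aa$'s cardinalities there — this is the ``polite combination for $\snsi$'' part. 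For the sorts in $\ssi$, instead, $\Aa$ may have infinitely many elements or a cardinality not dictated by the shared variables; here I invoke stable infiniteness of $\Ta_1$ w.r.t.\ $\ssi$ to push $\Aa$ to a model where every $\sigma \in \ssi$ is \emph{infinite} (countably infinite, by Theorem~\ref{thm:ordered-sl}), and then I need the chosen clause among \ref{item:strongsi}--\ref{item:weaksi} to let me push $\Ba$ to a model where those same sorts are also (countably) infinite, \emph{without disturbing} the finite cardinalities already fixed on $\snsi$. This is precisely what the $(S_1,S_2)$-versions of stable infiniteness, smoothness, and strong finite witnessability in Definition~\ref{newDefs} are designed to deliver: in each case the ``first component'' $\snsi$ is where we keep the finite-witness structure intact, and the ``second component'' $\ssi$ is where we are allowed to blow cardinalities up to infinity. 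I would handle the three cases in turn, each time producing a $\Ba'$ that still finitely witnesses $\wit(\Gamma_2)$ w.r.t.\ $\snsi$, still satisfies $\delta_V$, and has all $\ssi$-sorts infinite; case~\ref{item:strongsi} is the most direct (strong stable infiniteness on $(\ssi,\snsi)$ applied to the model $\Ba$ of $\wit(\Gamma_2)\wedge\delta_V$ after first applying the witness/smoothness machinery on $\snsi$), while cases~\ref{item:mediumsi} and~\ref{item:weaksi} require combining the smoothness-on-$(\snsi,\ssi)$ or strong-finite-witnessability-on-$(\snsi,\ssi)$ clause with the plain stable infiniteness to separately secure the ``infinite on $\ssi$'' and ``finitely witnessed on $\snsi$'' halves.

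Once both interpretations have all $\ssi$-sorts countably infinite and equal finite cardinalities on each $\snsi$-sort, cardinality agreement on all of $S$ holds, so it remains to produce the common arrangement over the full $V_{\mathrm{full}} = \fv{}{\Gamma_1}\cap\fv{}{\wit(\Gamma_2)}$ (which, since $\Gamma_2$ and $\wit(\Gamma_2)$ share the non-fresh variables, covers $\fv{}{\Gamma_1}\cap\fv{}{\Gamma_2}$). For $\snsi$-sorts the arrangement $\delta_V$ already decides all equalities among \emph{all} relevant variables, so nothing new is needed. For $\ssi$-sorts, $\delta_V$ only constrains the shared variables; but the non-shared $\ssi$-variables of $\Gamma_1$ live only in $\Aa$ and the non-shared $\ssi$-variables of $\wit(\Gamma_2)$ live only in $\Ba$, so I can freely read off the equalities these induce in their respective models and, because each relevant $\ssi$-sort is now infinite in both models, there is enough room to realize any resulting equivalence relation consistently — i.e.\ I can extend $\delta_V$ to a full arrangement $\delta_{\mathrm{full}}$ of $V_{\mathrm{full}}$ that both (suitably modified) interpretations satisfy. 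Then Theorem~\ref{oldbutgood} closes $2 \Rightarrow 1$.

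For the direction $1 \Rightarrow 2$: given a $(\Ta_1\oplus\Ta_2)$-interpretation $\Ca$ satisfying $\Gamma_1 \cup \Gamma_2$, I first replace the $\Ta_2$-reduct by a model of $\wit(\Gamma_2)$ (using the witness equivalence, interpreting the fresh $\ora{w}$), obtaining $\Ca \models \Gamma_1 \cup \wit(\Gamma_2)$ over the combined signature on a common domain. Reading off the equalities among the variables in $V$ that hold in $\Ca$ gives an arrangement $\delta_V$ of $V$ with $\Ca \models \delta_V$, hence $\Gamma_1 \cup \delta_V$ is $\Ta_1$-satisfiable and $\wit(\Gamma_2) \cup \delta_V$ is $\Ta_2$-satisfiable, which is exactly statement~2. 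I expect the main obstacle to be the bookkeeping in case~\ref{item:mediumsi} and case~\ref{item:weaksi} of the $2\Rightarrow1$ direction: making sure that the order in which one applies the ``witness on $\snsi$'' step and the ``inflate $\ssi$ to infinity'' step is legitimate — in particular that inflating $\ssi$-sorts does not silently add elements to $\snsi$-sorts beyond the finite-witness set, and that strong finite witnessability w.r.t.\ $(\snsi,\ssi)$ can be applied to $\wit(\Gamma_2)\wedge\delta_V$ where $\delta_V$ itself is (the $\snsi$-part of) an arrangement. The $(S_1,S_2)$-definitions are tailored to make this go through, but verifying the hypotheses of each clause are met at the point of application — especially tracking which interpretation is $\Aa$ and which is $\Ba$ and that ``$\card{\sigma^{\Ba}}$ infinite whenever $\card{\sigma^{\Aa}}$ infinite'' matches the already-infinite $\ssi$-sorts — is the delicate part.
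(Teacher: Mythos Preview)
Your plan matches the paper's proof: both directions are handled as you describe, with the heart of $2\Rightarrow 1$ being the cardinality-matching construction (packaged in the paper as a Main Lemma, argued case by case exactly in the order you outline) followed by an appeal to Theorem~\ref{oldbutgood}. One simplification you are missing: the step where you ``extend $\delta_V$ to a full arrangement over $V_{\mathrm{full}}=\fv{}{\Gamma_1}\cap\fv{}{\wit(\Gamma_2)}$'' is unnecessary, because $V_{\mathrm{full}}\subseteq V$ already --- for $\sigma\in\ssi$ one has $V_\sigma=\fv{\sigma}{\Gamma_1}\cap\fv{\sigma}{\wit(\Gamma_2)}$ by definition, and for $\sigma\in\snsi$ the set $V_\sigma=\fv{\sigma}{\wit(\Gamma_2)}$ contains the intersection. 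The paper sidesteps the whole issue by applying Theorem~\ref{oldbutgood} to $\Gamma_1'=\Gamma_1\cup\delta_V$ and $\Gamma_2'=\wit(\Gamma_2)\cup\delta_V$, for which one checks directly that $\fv{\sigma}{\Gamma_1'}\cap\fv{\sigma}{\Gamma_2'}=V_\sigma$ for every $\sigma\in S$; thus $\delta_V$ itself is the required arrangement and the ``infinite room'' argument you sketch for the $\ssi$-sorts is never needed. One small technical point you will need when writing the details: to bring the infinite $\ssi$-sorts of the $\Ta_2$-model down to countable \emph{without disturbing} the already-matched finite $\snsi$-cardinalities, the paper proves and uses a many-sorted L\"owenheim--Skolem variant (Lemma~\ref{lem:many-sorted-sl}) that preserves finite-sort cardinalities exactly; Theorem~\ref{thm:ordered-sl} alone does not guarantee this.
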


All three items of \Cref{yoniafterclark} include assumptions that guarantee that the two theories agree on cardinalities of shared sorts.
For example, in the first item, we first shrink the $S^{nsi}$-domains of the $T_2$-model using strong finite witnessability, and then expand them using smoothness. But then, to obtain infinite domains for the $S^{si}$ sorts, stable infiniteness is not enough, as we need to maintain the cardinalities of the $S^{nsi}$ domains while making the domains of the $S^{si}$ sorts infinite. For this, the stronger property of strong stable infiniteness is used.

The formal proof of this theorem is provided in \Cref{main-proof}, below.
\Cref{fig:proofs-plot} is a visualization of the claims in \Cref{yoniafterclark}.
The theorem considers two variants
of strong finite witnessability, two variants of smoothness,
and three variants of stable infiniteness.
For each of the three cases of \Cref{yoniafterclark}, \Cref{fig:proofs-plot} shows which variant of each property is assumed. 
The
height of each bar corresponds to the strength of the property.
In the first case, we use ordinary strong finite witnessability and smoothness, but the strongest
variant of stable infiniteness; 
in the second, we use ordinary strong finite witnessability with the new variants of stable infiniteness and smoothness;
and for the third, we use ordinary stable infiniteness and
the stronger variants of 
strong finite witnessability and smoothness.
The order of the bars corresponds to the order of their usage in the proof of each case.
%
%
The stage at which stable infiniteness is used
determines the required strength of the other properties: 
whatever is used before is taken in ordinary form, 
and whatever is used after requires
a stronger form.

\begin{figure}[t]
\begin{mdframed}
	\begin{center}
\begin{tikzpicture}
\begin{axis}[
xticklabels={,,Case 1,,Case 2,,,Case 3},
ytick={0,1,2,3},
ymin=0,
xmin=0,
yticklabels={,regular,medium,strong},
width=\textwidth,
height=.3\textwidth,
bar width=1,
legend style={at={(0.5,-0.3)},
	anchor=north,legend columns=-1}
]

\addplot [purple,fill,
ybar, area legend,
] coordinates {
(1,1) [s.f.w.]
};
\addplot [Green,fill,
ybar, area legend,
] coordinates {
(2,1) [sm]
};
\addplot [darkgray,fill,
ybar,area legend,
] coordinates {
(3,3) [s.i.]
};
\legend{strong finite witnessability, smoothness, stable infiniteness}

\addplot [purple,fill,
ybar, area legend
] coordinates {
(6,1) 
};
\addplot [darkgray,fill,
ybar,
] coordinates {
(7,2) 
};
\addplot [Green,fill,
ybar,
] coordinates {
(8,3) 
};

\addplot [darkgray,fill,
ybar,
] coordinates {
(11,1) 
};

\addplot [purple,fill,
ybar,
] coordinates {
(12,3) 
};

\addplot [Green,fill,
ybar,
] coordinates {
(13,3) 
};

\end{axis}

\end{tikzpicture}	
\caption{\Cref{yoniafterclark}.
The height of each bar corresponds to the strength of the property. The bars are ordered according to their usage in the proof.}
\label{fig:proofs-plot}
\end{center}
\end{mdframed}
\end{figure}

Going back to the standard definitions of stable infiniteness, smoothness, and strong finite witnessability, we get the following corollary by using case 1 of the theorem and noticing that smoothness w.r.t.~$S$ implies strong stable infiniteness w.r.t.~any partition of $S$.

\begin{corollary}
\label{item:fwpolite} 
Let $\ssi\subseteq S$ and $\snsi=S\setminus \ssi$.
Suppose $\Ta_1$ is stably infinite w.r.t.~$\ssi$ and
$\Ta_2$ is strongly finitely witnessable w.r.t.~$\snsi$ with witness $\wit$ and smooth w.r.t. 
$S$.
Then, the following are equivalent:
\begin{itemize*}
    \item[1.] $\Gamma_1\cup \Gamma_2$ is $(\T_1\oplus \T_2)$-satisfiable;
    \item[2.] there exists an arrangement $\delta_V$ over $V$ such that $\Gamma_1\cup \delta_V$ is $\Ta_1$-satisfiable and $\wit(\Gamma_2)\cup \delta_V$ is $\Ta_2$-satisfiable,
\end{itemize*}
where $V=\bigcup_{\sigma\in S}V_{\sigma}$, with
$V_{\sigma}=\fv{\sigma}{\wit(\Gamma_2)}$ for $\sigma\in \snsi$
and $V_{\sigma}=\fv{\sigma}{\Gamma_1}\cap\fv{\sigma}{\wit(\Gamma_2)}$ for $\sigma\in \ssi$.
\end{corollary}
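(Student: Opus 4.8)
The plan is to derive \Cref{item:fwpolite} as a direct instance of case~\ref{item:strongsi} of \Cref{yoniafterclark}. The only gap to close is that \Cref{item:fwpolite} hypothesizes ordinary smoothness of $\Ta_2$ w.r.t.~$S$, whereas case~\ref{item:strongsi} asks for two separate properties: strong stable infiniteness w.r.t.~$(\ssi,\snsi)$ and strong politeness (hence smoothness) w.r.t.~$\snsi$. So the first step is a short lemma-style observation: smoothness w.r.t.~$S$ implies smoothness w.r.t.~$\snsi$ (since $\snsi\subseteq S$, any cardinality assignment to $\snsi$ extends trivially to one on $S$ by leaving the $\ssi$-sorts as they are, and smoothness w.r.t.~$S$ then supplies the desired interpretation), and together with strong finite witnessability w.r.t.~$\snsi$ this gives that $\Ta_2$ is strongly polite w.r.t.~$\snsi$ with the same witness $\wit$.

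The second and main step is to show that smoothness w.r.t.~$S$ implies strong stable infiniteness w.r.t.~$(\ssi,\snsi)$. Given a quantifier-free $\phi$ and a $\Ta_2$-interpretation $\Aa\models\phi$, define $\kappa$ on $S$ by $\kappa(\sigma)=\aleph_0$ (or any infinite cardinal $\geq\card{\sigma^{\Aa}}$) for $\sigma\in\ssi$, and $\kappa(\sigma)=\card{\sigma^{\Aa}}$ for $\sigma\in\snsi$. Smoothness w.r.t.~$S$ yields a $\Ta_2$-interpretation $\Ba\models\phi$ with $\card{\sigma^{\Ba}}=\kappa(\sigma)$ for every $\sigma\in S$; in particular $\sigma^{\Ba}$ is infinite for $\sigma\in\ssi$ and $\card{\sigma^{\Ba}}=\card{\sigma^{\Aa}}$ for $\sigma\in\snsi$, which is exactly the defining condition of strong stable infiniteness w.r.t.~$(\ssi,\snsi)$. (One should only take care with the degenerate case $\kappa(\sigma)\geq\card{\sigma^{\Aa}}$ which holds by choosing $\kappa(\sigma)$ at least $\max(\aleph_0,\card{\sigma^{\Aa}})$; since $\kappa$ can return arbitrarily large cardinals this is unproblematic.)

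With both hypotheses of case~\ref{item:strongsi} established --- $\Ta_1$ stably infinite w.r.t.~$\ssi$ by assumption, $\Ta_2$ strongly stably infinite w.r.t.~$(\ssi,\snsi)$ and strongly polite w.r.t.~$\snsi$ with witness $\wit$ --- \Cref{yoniafterclark} directly gives the equivalence of conditions 1 and 2, with exactly the set $V$ described (namely $V_\sigma=\fv{\sigma}{\wit(\Gamma_2)}$ for $\sigma\in\snsi$ and $V_\sigma=\fv{\sigma}{\Gamma_1}\cap\fv{\sigma}{\wit(\Gamma_2)}$ for $\sigma\in\ssi$), completing the proof.

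I do not anticipate a real obstacle here: the corollary is essentially a packaging result, and the substantive content lives in \Cref{yoniafterclark}. The only point requiring minor care is the cardinal-arithmetic bookkeeping in the smoothness-to-strong-stable-infiniteness implication, in particular handling the constraint $\kappa(\sigma)\geq\card{\sigma^{\Aa}}$ uniformly for both $\ssi$ and $\snsi$ sorts; this is routine.
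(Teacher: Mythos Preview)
Your proposal is correct and follows essentially the same route as the paper's own proof: both reduce to case~\ref{item:strongsi} of \Cref{yoniafterclark} by observing that smoothness w.r.t.~$S$ yields (i) smoothness w.r.t.~$\snsi$ (hence strong politeness w.r.t.~$\snsi$ together with the assumed strong finite witnessability) and (ii) strong stable infiniteness w.r.t.~$(\ssi,\snsi)$ via exactly the cardinality-function argument you give, with $\kappa(\sigma)=\max(\aleph_0,\card{\sigma^{\Aa}})$ on $\ssi$ and $\kappa(\sigma)=\card{\sigma^{\Aa}}$ on $\snsi$.
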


Finally, the following result, which is closest
to \Cref{thm:original}, is
directly obtained
 from \Cref{item:fwpolite}, since
the strong politeness of $\Ta_{2}$ w.r.t.~$\ssi\cup \snsi$ implies that it is 
strongly finitely witnessable w.r.t.~$\snsi$ and smooth w.r.t.~$\ssi\cup \snsi$.

\begin{corollary}
\label{thm:main}
Let $\ssi\subseteq S$ and $\snsi=S\setminus \ssi$.
If $\Ta_1$ is stably infinite w.r.t.~$\ssi$ and
$\Ta_2$ is
strongly polite w.r.t.~$S$ with a witness
 $\wit$,
 then the following are equivalent:
\begin{itemize*}
    \item[1.] $\Gamma_1\cup \Gamma_2$ is $(\T_1\oplus \T_2)$-satisfiable;
    \item[2.] there exists an arrangement $\delta_V$ over $V$ such that $\Gamma_1\cup \delta_V$ is $\Ta_1$-satisfiable and $\wit(\Gamma_2)\cup \delta_V$ is $\Ta_2$-satisfiable,
\end{itemize*}
where $V=\bigcup_{\sigma\in S}V_{\sigma}$, with
$V_{\sigma}=\fv{\sigma}{\wit(\Gamma_2)}$ for each $\sigma\in \snsi$
and $V_{\sigma}=\fv{\sigma}{\Gamma_1}\cap\fv{\sigma}{\wit(\Gamma_2)}$ for each $\sigma\in \ssi$.

\end{corollary}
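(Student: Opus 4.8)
The plan is to derive \Cref{thm:main} as a routine special case of \Cref{item:fwpolite}, which in turn comes from case 1 of \Cref{yoniafterclark}. So the real work is bookkeeping about which variants of the properties are implied by plain strong politeness. First I would unpack the hypothesis: $\Ta_2$ is strongly polite w.r.t.\ $S = \ssi \cup \snsi$, meaning it is smooth w.r.t.\ $S$ and strongly finitely witnessable w.r.t.\ $S$ with witness $\wit$. I need to check that these imply (a) $\Ta_2$ is strongly finitely witnessable w.r.t.\ the \emph{subset} $\snsi$, and (b) $\Ta_2$ is smooth w.r.t.\ $S$ — where (b) is immediate since it is literally the hypothesis. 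For (a), the point is that strong finite witnessability w.r.t.\ a larger set $S$ implies it w.r.t.\ any subset $\snsi \subseteq S$: given $\phi$ and an arrangement $\delta$ over variables of sorts in $\snsi$, such a $\delta$ is in particular an arrangement over variables of sorts in $S$, so strong finite witnessability w.r.t.\ $S$ gives a finite witness w.r.t.\ $S$, hence a finite witness w.r.t.\ $\snsi$ (the witnessing condition $\sigma^{\Aa} = \fv{\sigma}{\wit(\phi)\wedge\delta}^{\Aa}$ for each $\sigma\in S$ trivially restricts to each $\sigma\in\snsi$). The $\Ta$-equivalence clause on $\wit$ is unchanged. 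So the same $\wit$ works.

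With (a) and (b) in hand, together with the standing hypothesis that $\Ta_1$ is stably infinite w.r.t.\ $\ssi$, all hypotheses of \Cref{item:fwpolite} are satisfied with exactly the same $\ssi$, $\snsi$, and $\wit$. Applying \Cref{item:fwpolite} yields the equivalence of items 1 and 2 with the set $V = \bigcup_{\sigma\in S} V_\sigma$ where $V_\sigma = \fv{\sigma}{\wit(\Gamma_2)}$ for $\sigma \in \snsi$ and $V_\sigma = \fv{\sigma}{\Gamma_1} \cap \fv{\sigma}{\wit(\Gamma_2)}$ for $\sigma \in \ssi$ — which is verbatim the conclusion of \Cref{thm:main}. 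So the proof is essentially ``check the hypotheses transfer, then invoke \Cref{item:fwpolite}.''

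For completeness I would also note why \Cref{item:fwpolite} itself follows from \Cref{yoniafterclark}: one uses case~\ref{item:strongsi}, and the only gap is that case~\ref{item:strongsi} asks for \emph{strong stable infiniteness} of $\Ta_2$ w.r.t.\ $(\ssi,\snsi)$ rather than just strong politeness w.r.t.\ $S$; but smoothness w.r.t.\ $S$ implies strong stable infiniteness w.r.t.\ $(\ssi,\snsi)$ — given a model, apply smoothness with the cardinal function that sends each $\sigma \in \ssi$ to $\aleph_0$ (or to $\card{\sigma^{\Aa}}$ if already larger) while leaving the $\snsi$-cardinalities untouched, which is exactly what strong stable infiniteness w.r.t.\ $(\ssi,\snsi)$ demands. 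Ordinary strong finite witnessability w.r.t.\ $\snsi$ and ordinary smoothness w.r.t.\ $S$ are then precisely the other two ingredients of case~\ref{item:strongsi}, so \Cref{item:fwpolite} drops out.

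There is essentially no obstacle here: the statement is a corollary, and every step is a definition-chase. The only thing to be careful about is the direction of the subset relation in the transfer of strong finite witnessability (it goes from the \emph{larger} set of sorts $S$ down to $\snsi$, which is the easy direction because an arrangement over fewer sorts is still an arrangement over more sorts once we regard it that way, and the finite-witness condition only gets weaker when we restrict to fewer sorts), and making sure the \emph{same} witness function $\wit$ is used throughout so that the set $V$ in the conclusion is stated in terms of it consistently. I would present the argument in two short sentences invoking \Cref{item:fwpolite}, exactly as the paragraph preceding \Cref{thm:main} already sketches.
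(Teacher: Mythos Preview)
Your proposal is correct and matches the paper's approach exactly: the paper derives \Cref{thm:main} directly from \Cref{item:fwpolite} by observing that strong politeness of $\Ta_2$ w.r.t.\ $S=\ssi\cup\snsi$ yields strong finite witnessability w.r.t.\ the subset $\snsi$ and smoothness w.r.t.\ $S$, and your additional unpacking of why \Cref{item:fwpolite} follows from case~\ref{item:strongsi} of \Cref{yoniafterclark} via smoothness $\Rightarrow$ strong stable infiniteness is likewise the paper's own justification (given in the appendix). The only minor imprecision is that case~\ref{item:strongsi} literally asks for strong politeness w.r.t.\ $\snsi$, hence smoothness w.r.t.\ $\snsi$ rather than $S$; but smoothness w.r.t.\ $S$ restricts to any subset by extending the cardinal function trivially on $\ssi$, so this is harmless.
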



Compared to \Cref{thm:original}, \Cref{thm:main} partitions
$S$ into $\ssi$ and $\snsi$ and requires 
that $\Ta_1$ be stably infinite w.r.t.~$\ssi$.
The gain from this requirement is that the set $V_{\sigma}$ 
is potentially reduced for $\sigma\in \ssi$.
Note that unlike \Cref{yoniafterclark} and \Cref{item:fwpolite},
\Cref{thm:main} has the same assumptions regarding $\Ta_2$
as the original \Cref{thm:original} from \cite{JBLPAR}.
We show its potential impact in the next example.

\begin{example}
\label{example:arrbvint}
Consider the theory $\thlistintbvfour$ from \Cref{ex:thlistintbvfour}.
Let $\Gamma_{1}$ be $x=5\wedge v=0000\wedge w=w\ \&\ v$,
and let $\Gamma_2$ be $a_{0}=cons(x, v,a_1)\wedge \bigwedge_{i=1}^{n}a_{i}=cons(y_i,w,a_{i+1})$.
Using the witness function $\wit$ from~\cite{DBLP:conf/cade/0007ZRLFB20}, 
$\wit(\Gamma_2)=\Gamma_{2}$.
The polite combination approach reduces the ${\thlistintbvfour}$-satisfiability
of $\Gamma_{1}\wedge\Gamma_2$ to the existence of an 
arrangement $\delta$ over $\{x,v,w\}\cup\set{y_1\til y_n}$, such that
$\Gamma_{1}\wedge\delta$ is $\thintbvfour$-satisfiable
and $\wit(\Gamma_2)\wedge\delta$ is ${\thlist}$-satisfiable.
\Cref{thm:main} shows that we can do better.
Since ${\thintbvfour}$ is stably infinite w.r.t.~$\{\intsort\}$,
it is enough to check the existence of an arrangement 
over the variables of sort $\bvsort{4}$ that occur in $\wit(\Gamma_2)$, 
together with the variables of sort $\intsort$ that are shared between $\Gamma_{1}$ and $\Gamma_{2}$.
This means that arrangements over $\{x,v,w\}$ are considered, instead of over $\{x,v,w\}\cup\set{y_1\til y_n}$.
As $n$ becomes large, standard polite combination requires considering exponentially more arrangements, while the number of arrangements considered by our combination method remains the same.
\qed
\end{example}

\subsection{Proof of \Cref{yoniafterclark}}
\label{main-proof}
The left-to-right direction is straightforward, using the reducts
of the satisfying interpretation of $\Gamma_1\cup\Gamma_2$ to
$\Sigma_1$ and $\Sigma_2$.
We now focus on the right-to-left direction, and begin with the following lemma,
which strengthens \Cref{thm:ordered-sl}, obtaining a many-sorted L\"{o}wenheim-Skolem Theorem,
where the cardinality of the finite sorts remains the same.

\begin{lemma}
\label{lem:many-sorted-sl}
Let $\Sigma$ be	a signature, $\Ta$ a $\Sigma$-theory,
$\varphi$ a $\Sigma$-formula, and $\Aa$ a
$\Ta$-interpretation that satisfies $\phi$.
Let $\sorts{\Sigma}=S_{\Aa}^{\finite}\uplus S_{\Aa}^{\infinite}$,
where $\sigma^{\Aa}$ is finite for every
$\sigma\in S_{\Aa}^{\finite}$
and
$\sigma^{\Aa}$ is infinite for every
$\sigma\in S_{\Aa}^{\infinite}$.
Then there exists a $\Ta$-interpretation $\Ba$ that satisfies
$\varphi$ such that
$\card{\sigma^{\Ba}}=\card{\sigma^{\Aa}}$
for every $\sigma\in S_{\Aa}^{\finite}$
and
$\sigma^{\Ba}$ is countable for every $\sigma\in S_{\Aa}^{\infinite}$.
\end{lemma}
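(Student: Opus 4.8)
The plan is to derive this refined (many-sorted) L\"owenheim--Skolem statement from \Cref{thm:ordered-sl} by a signature-expansion trick that freezes the finite sorts. First I would introduce, for each sort $\sigma\in S_{\Aa}^{\finite}$, a fresh constant symbol $c^\sigma_a$ for every element $a\in\sigma^{\Aa}$, together with the sentences asserting that these constants are pairwise distinct and that they exhaust the sort, i.e.\ $\forall y_\sigma.\bigvee_{a\in\sigma^{\Aa}} y_\sigma = c^\sigma_a$ (this is a finite disjunction precisely because $\sigma^{\Aa}$ is finite). Call $\Sigma'$ the expanded signature and let $\Psi$ be the union of $\{\varphi\}$ with all these sentences. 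Since $\Aa$ already satisfies $\varphi$, expanding $\Aa$ by interpreting $c^\sigma_a$ as $a$ gives a $\Sigma'$-structure satisfying $\Psi$, so $\Psi$ is satisfiable.

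Next I would apply \Cref{thm:ordered-sl} to the set $\Psi$ of $\Sigma'$-formulas, obtaining a $\Sigma'$-interpretation $\Ba'$ that satisfies $\Psi$ and in which every infinite sort is countable. The ``exhausting'' sentences force $\card{\sigma^{\Ba'}}\le\card{\sigma^{\Aa}}$ for $\sigma\in S_{\Aa}^{\finite}$, and the distinctness sentences force $\card{\sigma^{\Ba'}}\ge\card{\sigma^{\Aa}}$, so together $\card{\sigma^{\Ba'}}=\card{\sigma^{\Aa}}$ for every $\sigma\in S_{\Aa}^{\finite}$; meanwhile for $\sigma\in S_{\Aa}^{\infinite}$ the sort $\sigma^{\Ba'}$ must still be infinite (it contains the image of $\varphi$'s witnesses — more carefully, one observes $\sigma^{\Ba'}$ cannot be finite, since a finite sort in $S_{\Aa}^{\infinite}$ would contradict the fact that no exhausting sentence was added for it and $\varphi$ is satisfiable in $\Aa$ with that sort infinite; alternatively just note \Cref{thm:ordered-sl} only promises countability of the sorts that happen to be infinite, and we separately argue those sorts are infinite because... ) — here I would need to be slightly careful: \Cref{thm:ordered-sl} does not by itself guarantee that a sort infinite in $\Aa$ stays infinite in $\Ba'$. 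To handle this cleanly, I would also add to $\Psi$, for each $\sigma\in S_{\Aa}^{\infinite}$, the sentences $\psi^\sigma_{\ge n}$ from \Cref{fig:cardfor} for all $n$, which are all satisfied by $\Aa$ and which force $\sigma^{\Ba'}$ to be infinite; then \Cref{thm:ordered-sl} makes it countable. Finally I take $\Ba$ to be the reduct of $\Ba'$ to the original signature $\Sigma$; it still satisfies $\varphi$, and it has the desired cardinalities.

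The main obstacle, and the only subtle point, is exactly the one flagged above: ensuring that the sorts that were infinite in $\Aa$ do not collapse to finite ones in the Skolem-style model, while the sorts that were finite are pinned to their exact cardinality. Both directions are handled by adding appropriate (infinitely many, in the infinite case) first-order sentences to the set fed to \Cref{thm:ordered-sl} — the finite sorts get a single exhausting sentence plus a distinctness sentence, and the infinite sorts get the infinite family $\{\psi^\sigma_{\ge n}\}_{n}$. Note that \Cref{thm:ordered-sl} is stated for an arbitrary \emph{set} of formulas (not a single formula), so adding infinitely many sentences is harmless. One should also remember that the constants added for the finite sorts are a genuine signature extension, so the application of \Cref{thm:ordered-sl} must be to the extended signature $\Sigma'$, and the countability conclusion transfers to the $\Sigma$-reduct since reducts do not change domains. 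No smoothness or theory-specific reasoning is needed here; this lemma is purely model-theoretic and serves as the workhorse for the subsequent case analysis in the proof of \Cref{yoniafterclark}.
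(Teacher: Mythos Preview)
Your argument has a genuine gap: nothing in your set $\Psi$ forces the model produced by \Cref{thm:ordered-sl} to be a $\Ta$-interpretation. You include $\{\varphi\}$, the distinctness/exhaustion sentences for the finite sorts, and the $\psi^\sigma_{\ge n}$ for the infinite sorts, but you never include the axioms of $\Ta$. \Cref{thm:ordered-sl} gives you \emph{some} interpretation $\Ba'$ satisfying $\Psi$ with the right cardinality profile, but its $\Sigma$-reduct need not belong to $\Ta$. The lemma explicitly asks for a $\Ta$-interpretation $\Ba$, and your closing remark that ``no theory-specific reasoning is needed'' is exactly where the slip occurs. The fix is easy: add the axiom set $Ax$ of $\Ta$ to $\Psi$ (it is satisfied by $\Aa$, so $\Psi$ stays satisfiable), and then $\Ba'\models Ax$ guarantees the reduct is a $\Ta$-interpretation. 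This is precisely what the paper does.

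A secondary point: the signature-expansion with fresh constants is unnecessary. The paper simply uses the sentences $\psi^\sigma_{=\card{\sigma^{\Aa}}}$ from \Cref{fig:cardfor} for each $\sigma\in S_{\Aa}^{\finite}$ and $\neg\psi^\sigma_{=n}$ for each $\sigma\in S_{\Aa}^{\infinite}$ and each $n$, all of which are already $\Sigma$-sentences. Together with $Ax\cup\{\varphi\}$ this set is satisfied by $\Aa$, and \Cref{thm:ordered-sl} applied to it (in the original signature $\Sigma$) gives the desired $\Ba$ directly, with no reduct step needed. Your constant-naming trick achieves the same effect as $\psi^\sigma_{=n}$ but at the cost of changing the signature; it works once $Ax$ is added, but it is a detour.
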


\noindent
The proof of \Cref{yoniafterclark} continues with the following main lemma.

\begin{lemma}[Main Lemma]
\label{lem:same-card}
Let $\ssi\subseteq S$ and $\snsi=S\setminus \ssi$,
Suppose $\Ta_1$ is stably infinite w.r.t.~$\ssi$ and
that one of the three cases of \Cref{yoniafterclark} holds.
Further, assume 
there exists an arrangement $\delta_V$ over $V$ such that $\Gamma_1\cup \delta_V$ is $\Ta_1$-satisfiable, and $\wit(\Gamma_2)\cup \delta_V$ is $\Ta_2$-satisfiable,
where $V=\bigcup_{\sigma\in S}V_{\sigma}$, with
$V_{\sigma}=\fv{\sigma}{\wit(\Gamma_2)}$ for each $\sigma\in \snsi$
and $V_{\sigma}=\fv{\sigma}{\Gamma_1}\cap\fv{\sigma}{\wit(\Gamma_2)}$ for each $\sigma\in \ssi$.
Then,
there is a $\Ta_1$-interpretation $\Aa$
that satisfies $\Gamma_1\cup \delta_{V}$ and a
$\Ta_2$-interpretation $\Ba$ that satisfies $\wit(\Gamma_2)\cup\delta_V$
such that
$\card{\sigma^{\Aa}}=\card{\sigma^{\Ba}}$ for all
$\sigma\in S$.
\end{lemma}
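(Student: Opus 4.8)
\emph{Strategy, and the $\Ta_1$-side.} By hypothesis there are a $\Ta_1$-interpretation $\Aa_0\models\Gamma_1\cup\delta_V$ and a $\Ta_2$-interpretation $\Ba_0\models\wit(\Gamma_2)\cup\delta_V$. I split $\delta_V=\delta_V^{\snsi}\w\delta_V^{\ssi}$ into its parts over the $\snsi$-sorted and the $\ssi$-sorted variables of $V$: the former is an arrangement of all of $\fv{\snsi}{\wit(\Gamma_2)}$, the latter an arrangement of the variables shared by $\Gamma_1$ and $\wit(\Gamma_2)$ (all of which are free in $\Gamma_2$, since $\wit$ only adds fresh variables). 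For $\sigma\in S$ let $k_\sigma$ be the number of equivalence classes of $\delta_V$ restricted to $V_\sigma$, so that every model of $\delta_V$ has at least $k_\sigma$ elements of sort $\sigma$. First I fix the $\Ta_1$-model once and for all: apply stable infiniteness of $\Ta_1$ w.r.t.\ $\ssi$ to $\Aa_0$, then \Cref{lem:many-sorted-sl} to the result, getting a $\Ta_1$-interpretation $\Aa\models\Gamma_1\cup\delta_V$ with $\card{\sigma^{\Aa}}=\aleph_0$ for $\sigma\in\ssi$, with every infinite $\snsi$-sort made countable, and with every finite sort unchanged. Put $n_\sigma:=\card{\sigma^{\Aa}}$ for $\sigma\in\snsi$; then $k_\sigma\le n_\sigma$, and $n_\sigma$ is either finite or $\aleph_0$. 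It then suffices to build a $\Ta_2$-model of $\wit(\Gamma_2)\cup\delta_V$ whose $\ssi$-sorts are countably infinite and whose $\snsi$-sort $\sigma$ has exactly $n_\sigma$ elements.

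\emph{The $\Ta_2$-side (Case~1).} Starting from $\Ba_0$, apply three transformations in order. (i) \emph{Shrink}: by strong finite witnessability of $\Ta_2$ w.r.t.\ $\snsi$, applied with the arrangement $\delta_V^{\snsi}$, obtain a finite witness of $\wit(\Gamma_2)\w\delta_V^{\snsi}$ w.r.t.\ $\snsi$, in which every $\snsi$-sort $\sigma$ has exactly $k_\sigma$ elements. (ii) \emph{Grow}: by smoothness of $\Ta_2$ w.r.t.\ $\snsi$ with $\kappa(\sigma)=n_\sigma$ (legitimate since $n_\sigma\ge k_\sigma$), bring every $\snsi$-sort up to exactly the size of the corresponding sort of $\Aa$. (iii) \emph{Make infinite}: by strong stable infiniteness of $\Ta_2$ w.r.t.\ $(\ssi,\snsi)$, make every $\ssi$-sort infinite while keeping every $\snsi$-cardinality fixed at $n_\sigma$ --- ordinary stable infiniteness does not suffice here, as it gives no control over the $\snsi$-sorts --- and then apply \Cref{lem:many-sorted-sl} once more to reduce the now-infinite $\ssi$-sorts to $\aleph_0$ without disturbing the finite $\snsi$-sorts. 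The result $\Ba$ satisfies $\wit(\Gamma_2)\cup\delta_V$ and has $\card{\sigma^{\Ba}}=\card{\sigma^{\Aa}}$ for every $\sigma\in S$.

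\emph{Cases~2 and~3.} These run the same three operations but invoke the stable-infiniteness step earlier: in Case~2 the order is shrink / make-infinite / grow, and in Case~3 it is make-infinite / shrink / grow, as recorded in \Cref{fig:proofs-plot}. The governing principle is that whatever operation is performed \emph{after} the stable-infiniteness step must preserve the infiniteness of the $\ssi$-domains established by that step; this is exactly the extra guarantee built into the two-set variants of smoothness and strong finite witnessability in \Cref{newDefs}. Accordingly Case~2 uses smoothness w.r.t.\ $(\snsi,\ssi)$ for its final grow, and Case~3 uses the two-set variants of both strong finite witnessability and smoothness for its shrink and grow. In each case a routine check confirms that the $\snsi$-cardinalities never drop below $k_\sigma$ along the way (they are pinned down by $\delta_V^{\snsi}$), end up exactly at $n_\sigma$, and that the $\ssi$-sorts are infinite at the moment \Cref{lem:many-sorted-sl} is invoked.

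\emph{Main obstacle.} Two points need care. The first is the bookkeeping itself: each transformation can undo a cardinality invariant established by an earlier one, so for every ordering one must check that the precise strengthening chosen is robust under the later steps --- this is the reason three different packages of hypotheses in \Cref{yoniafterclark} all yield the same conclusion, and it is the bulk of the work. The second, subtler, point is already visible in step~(i): strong finite witnessability controls only the $\snsi$-part $\delta_V^{\snsi}$ of the arrangement, whereas the conclusion demands that $\Ba$ also satisfy the shared-sort part $\delta_V^{\ssi}$; if this is lost, $\Aa$ and $\Ba$ disagree on the arrangement of the shared variables and the subsequent appeal to \Cref{oldbutgood} in the proof of \Cref{yoniafterclark} breaks down. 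I would handle this by folding $\delta_V^{\ssi}$ (a finite conjunction of equalities and disequalities over variables free in $\Gamma_2$) into $\Gamma_2$ before applying the witness and using the witness-equivalence property, or equivalently by arguing directly that among the finite witnesses of $\wit(\Gamma_2)\w\delta_V^{\snsi}$ w.r.t.\ $\snsi$ there is one realizing the fixed, finite arrangement $\delta_V^{\ssi}$ of the shared variables. Once the lemma is in hand, \Cref{yoniafterclark} follows by feeding $\Aa$ and $\Ba$ into \Cref{oldbutgood} (with $\wit(\Gamma_2)$ in place of $\Gamma_2$ and the arrangement restricted to the shared variables) and then discarding the fresh witness variables.
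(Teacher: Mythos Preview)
Your proposal is correct and follows essentially the same approach as the paper: fix the $\Ta_1$-model $\Aa$ with countably infinite $\ssi$-sorts via stable infiniteness plus L\"owenheim--Skolem, then in each of the three cases apply shrink/grow/make-infinite in exactly the orders you describe (matching \Cref{fig:proofs-plot}), finishing with \Cref{lem:many-sorted-sl} to bring the $\ssi$-sorts of the $\Ta_2$-model down to $\aleph_0$.

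One remark on your ``main obstacle'': the paper's own proof simply writes that strong finite witnessability w.r.t.\ $\snsi$ yields a $\Ta_2$-interpretation satisfying $\psi_2\cup\delta_V$ (the \emph{full} arrangement), without isolating the $\delta_V^{\ssi}$ part as you do. So the subtlety you flag --- that strong finite witnessability w.r.t.\ $\snsi$ only lets you conjoin an $\snsi$-sorted arrangement --- is not explicitly addressed in the paper either; it is tacitly absorbed into the claim. Your instinct to be careful here is sound, but note that your ``folding'' fix changes $\wit(\Gamma_2)$ to $\wit(\Gamma_2\wedge\delta_V^{\ssi})$, which need not coincide with $\wit(\Gamma_2)\wedge\delta_V^{\ssi}$, so it alters the statement of the lemma. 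The cleaner route (implicit in the paper) is to carry $\delta_V^{\ssi}$ along as part of the quantifier-free formula through the smoothness and stable-infiniteness steps --- both of those apply to arbitrary quantifier-free formulas --- and to handle the strong-finite-witnessability step by observing that $\delta_V$ restricted to $\snsi$-variables is a legitimate arrangement and that the resulting $\Ba$ can be taken to agree with $\Ba_0$ on the $\ssi$-sorted variable equalities (the paper does not spell this out).
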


\begin{proof}
Let $\psi_2:=\wit(\Gamma_2)$.
Since $\Ta_1$ is stably infinite w.r.t.~$\ssi$, there is a $\Ta_1$-interpretation $\Aa$ 
satisfying $\Gamma_1\cup\delta_{V}$ in which
$\sigma^{\Aa}$ is infinite for each $\sigma\in \ssi$.
By \Cref{thm:ordered-sl}, we may assume that $\sigma^{\Aa}$ is countable
for each $\sigma\in \ssi$.
We consider the first case of \Cref{yoniafterclark}
(the others are omitted due to space constraints).
Suppose $\Ta_2$ is strongly stably infinite w.r.t.~$(\ssi,\snsi)$ and strongly polite w.r.t.~$\snsi$.
Since $\Ta_2$ is strongly finitely-witnessable w.r.t.~$\snsi$,
there exists a $\Ta_2$-interpretation $\Ba$
that satisfies $\psi_{2}\cup\delta_{V}$ such that
$\sigma^{\Ba}=V_{\sigma}^{\Ba}$ for each $\sigma\in \snsi$.
Since $\Aa$ and $\Ba$ satisfy $\delta_{V}$,
we have that for every $\sigma\in \snsi$,
$\card{\sigma^{\Ba}}=\card{V_{\sigma}^{\Ba}}=\card{V_{\sigma}^{\Aa}}\leq \card{\sigma^{\Aa}}$.
$\Ta_2$ is also smooth w.r.t.~$\snsi$, and so there exists
a $\Ta_2$-interpretation $\Ba'$ satisfying
$\psi_2\cup \delta_{V}$ such that
$\card{\sigma^{\Ba'}}=\card{\sigma^{\Aa}}$ for each $\sigma\in \snsi$.
Finally, $\Ta_2$ is strongly stably infinite w.r.t.~$(\ssi,\snsi)$,
so there is a $\Ta_2$-interpretation
$\Ba''$ that satisfies $\psi_{2}\cup\delta_{V}$ such that
$\sigma^{\Ba''}$ is infinite for each $\sigma\in \ssi$
and $\card{\sigma^{\Ba''}}=\card{\sigma^{\Ba'}}=\card{\sigma^{\Aa}}$
for each $\sigma\in \snsi$.
By \Cref{lem:many-sorted-sl}, we may assume that
$\sigma^{\Ba''}$ is countable for each $\sigma\in \ssi$.  Thus, $\card{\sigma^{\Ba''}}=\card{\sigma^{\Aa}}$ for each
$\sigma\in S$.
\end{proof}

%
%
%

We now conclude 
 \Cref{yoniafterclark}:
Let $\Ta:=\Ta_1\oplus\Ta_{2}$.
  \Cref{lem:same-card} 
  gives us 
  a $\Ta_1$ interpretation $\Aa$ with 
  $\Aa\models \Gamma_{1}\cup \delta_{V}$
  and a $\Ta_2$ interpretation $\Ba$ with
  $\Ba\models \psi_{2}\cup \delta_{V}$,
  and $\card{\sigma^{\Aa}}=\card{\sigma^{\Ba}}$
  for $\sigma\in S$.
  Set $\Gamma_{1}':=\Gamma_1\cup\delta_{V}$ and
  $\Gamma_{2}':=\psi_{2}\cup\delta_{V}$.
  Then, $V_{\sigma}=\fv{\sigma}{\Gamma_{1}'}\cap \fv{\sigma}{\Gamma_{2}'}$
  for $\sigma\in S$.
Now,
  $\Aa\models \Gamma_{1}'\cup \delta_{V}$ and
  $\Ba\models \Gamma_{2}'\cup \delta_{V}$.
  Also, $|\sigma^{\Aa}|=|\sigma^{\Ba}|$ for 
  $\sigma\in S$.
  By Theorem \ref{oldbutgood},
  $\Gamma_{1}'\cup\Gamma_{2}'$ is $\Ta$-satisfiable.
  In particular, $\Gamma_{1}\cup\{\psi_{2}\}$ is $\Ta$-satisfiable, and
  hence also $\Gamma_{1}\cup\{\exists \overline{w}.\psi_{2}\}$, with
  $\overline{w}=\fv{}{\wit(\Gamma_2)}\setminus\fv{}{\Gamma_2}$.
  Finally, $\exists\overline{w}.\wit(\Gamma_2)$ is $\Ta_2$-equivalent to $\Gamma_2$, hence
  $\Gamma_{1}\cup\Gamma_{2}$ is $\Ta$-satisfiable.
  \qed

\bigskip

\section{Preliminary Case Study}
\label{sec:casestudy}
The 
results
presented in \Cref{sec:polite_nelson_oppen}
was motivated by a set of smart contract verification benchmarks.
We obtained these benchmarks by applying the open-source Move Prover verifier~\cite{DBLP:conf/cav/ZhongCQGBPZBD20} to smart contracts found in the open-source Diem project~\cite{github-diem}.
The Move prover is a formal verifier for smart contracts
written in the Move language \cite{move_white} and was designed to target smart contracts used in the Diem blockchain \cite{libra_blockchain_white}.
It works via a translation to the Boogie 
verification framework \cite{thisisboogie2}, which in turn
produces \smtlib benchmarks that are dispatched to SMT solvers.
The benchmarks we obtained involve datatypes, integers, Booleans, and quantifiers.
Our case study began by running \cvcfour~\cite{CVC4} on the benchmarks.
For most of the benchmarks that were solved by \cvcfour,
theory combination took a small percentage of the overall runtime of the solver,
accounting for 10\% or less in all but 1 benchmark.
However, solving that benchmark
took 81 seconds, of which 20 seconds was dedicated to theory combination.

We implemented an optimization to the datatype solver of \cvcfour
based on \Cref{thm:main}.
With the original polite combination method, every term that
originates from the theory of datatypes with another sort
is shared with the other theories, triggering an analysis of the arrangements
of these terms.
In our optimization,
we limit the sharing of such terms
to those of Boolean sort.
In the language of \Cref{thm:main}, $\Ta_{1}$ is the combined theory
of Booleans, uninterpreted functions, and integers, which is stably infinite w.r.t.
the uninterpreted sorts and integer sorts. $\Ta_2$ is an instance of the theory of datatypes,
which is strongly polite w.r.t its element sorts, which in this case are 
the sorts of $\Ta_1$.

A comparison of an original and optimized run on the difficult benchmark is shown in \Cref{fig:challenge}.
As shown, the optimization reduces the total running time by 75\%, and
the time spent on theory combination in particular by 83\%. 
To further isolate the effectiveness of our optimization,
we report the number of terms that each theory solver considered.
In \cvcfour, 
constraints are not flattened, so shared \emph{terms} are processed instead of shared variables.
Each theory solver maintains its own data structure for tracking equality information. These data structures contain terms belonging to the theory that either come from the input assertions or are shared with another theory.
A data structure is also maintained that contains all shared terms belonging to any theory.
The last 4 columns of Figure 4 count the number of times (in thousands) a term was added to the equality data structure for the theory of datatypes (DT), integers (INT), and uninterpreted functions and Booleans (UFB), as well as to the the shared term data structure (shared).
With the optimization, the datatype solver 
keeps more inferred assertions internally, which leads to an increase in the number of additions of terms to
its data structure.
However, sharing fewer terms, reduces the number of terms in the data structures for the other theories.
Moreover, while the total number of terms considered remains roughly the same, the number of shared terms decreases by 24\%.
This suggests that although the workload on the individual theory solvers is roughly similar,
a decrease in the number of shared terms in the optimized run results in a significant improvement in the overall runtime.
Although our evidence is only anecdotal at the moment, we believe this benchmark is highly representative of the potential benefits of our optimization.

\begin{figure}[t]
\begin{mdframed}
\begin{center}
\begin{tabular}{|c|c|c||c|c|c|c|c|c|}\hline
& total (s) & comb (s) & DT & INT  & UFB & shared \\\hline
optimized &     34.9 & 3.4  & 236.1 & 212.1 & 78.4 & 125.8\\\hline
original & 81.5 & 20.3 & 116.0 & 281.0 & 123.9 & 163.5\\\hline
\end{tabular}
	
\end{center}
\caption{Runtimes (in seconds) and number of terms (in thousands) added to the data structures of DT, INT, UFB, and the number of shared terms (shared).}
\label{fig:challenge}	
\end{mdframed}
	
\end{figure}

\section{Conclusion}
\label{sec:conc}
This paper makes two contributions:
First, we separated politeness and strong politeness,
which shows that sometimes, the (typically harder) task 
of finding a strong witness is not a waste of efforts.
Then, we provided an optimization to the polite combination method,
which is applies when one of the theories in the combination is stably infinite w.r.t a subset of the sorts.

We envision several directions for future work. First, 
the sepration of politeness from strong politeness
demonstrates a need to identify sufficient criteria for the equivalence
of these notions --- such as, for instance, the {\em additivity} criterion introduced 
by Sheng et al.~\cite{DBLP:conf/cade/0007ZRLFB20}.
%
Second, polite combination might be optimized
by applying the witness function only to part of the purified input formula.
Finally, we plan to extend the initial
implementation of this approach in \cvcfour
and evaluate its impact based on more benchmarks. 

\newpage

\bibliographystyle{splncs04}
\bibliography{polite}

\begin{report}
\newpage
\appendix
\section{Appendix}

\subsection{Theories vs. Classes of Structures}
\label{ap:sticky}
In papers about theory combination, theories are often defined
in terms of some set $Ax$ of sentences (\emph{axioms}) 
(see, e.g., \cite{Casal2018,TinZar-JELIA-04,JBLPAR}). 
Specifically, a theory is defined as the set of all sentences entailed
by $Ax$ or, interchangeably, as the class of 
all structures that satisfy $Ax$.
This is the approach we take in this paper.
The main reason for this is that the combination theorems we prove
and cite here rely on some forms of the
L\"{o}wenheim-Skolem theorem, which do not hold for arbitrary
classes of structures, but do hold when defining theories this way.
On the other hand, theories in 
the \smtlib standard, as well as in many SMT papers about
individual theories,
are defined more generally as classes of structures without 
reference to a set of axioms.

However, this discrepancy is not substantial since 
the two notions of a theory as a class of structures are easily
interreducible;
as mentioned in the introduction,
every theory $T$ in the second, more general sense induces 
a theory in the first sense
that is equivalent to $T$ for all of our intents and purposes 
since it entails exactly the same sentences as $T$.
To be more precise, the combination theorems that we prove and cite
only regard satisfiability of formulas in a theory 
(though their proofs
may analyze the structures of a theory).
The important thing is that the transformation between the two notions
preserves satisfiability, and therefore interchanging these notions
can be done without loss of generality.
For completeness, we prove this fact below:

\begin{lemma}
Let $\Sigma$ be a signature, $\Ca$ a class of $\Sigma$-structures,
$Ax$ the set of $\Sigma$-sentences satisfied by all structures of $\Ca$,
and $\Ta_{\Ca}$ the class of all $\Sigma$-structures 
that satisfy all sentences of $Ax$.
Then, for every
$\Sigma$-formula $\varphi$, 
$\varphi$ is $\Ta_{\Ca}$-satisfiable iff $\varphi$ is satisfied
by some $\Sigma$-interpretation whose variable-free part is
in $\Ca$.
\end{lemma}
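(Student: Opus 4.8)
The plan is to prove the two directions separately; the right-to-left direction is immediate, and the left-to-right direction is a short argument by contraposition.

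For the right-to-left direction, suppose $\varphi$ is satisfied by a $\Sigma$-interpretation $\Aa$ whose variable-free part, call it $\Aa_0$, lies in $\Ca$. By the definition of $Ax$, every structure of $\Ca$ satisfies every sentence of $Ax$; hence $\Aa_0$ satisfies $Ax$, so $\Aa_0\in\Ta_{\Ca}$, i.e., $\Aa$ is a $\Ta_{\Ca}$-interpretation. Since $\Aa\models\varphi$, this shows $\varphi$ is $\Ta_{\Ca}$-satisfiable. (This direction only uses the inclusion $\Ca\subseteq\Ta_{\Ca}$, which holds by construction.)

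For the left-to-right direction, I would first pass from the formula to a sentence. A $\Sigma$-formula $\varphi$ has only finitely many free variables $\bar x$, so $\exists\,\bar x.\,\varphi$ is a well-formed $\Sigma$-sentence, and for any class $\Da$ of $\Sigma$-structures, $\varphi$ is satisfied by some interpretation whose variable-free part is in $\Da$ if and only if some structure of $\Da$ satisfies $\exists\,\bar x.\,\varphi$. Now assume $\varphi$ is $\Ta_{\Ca}$-satisfiable; then some $\Ma\in\Ta_{\Ca}$ satisfies $\exists\,\bar x.\,\varphi$. Suppose, for contradiction, that no structure of $\Ca$ satisfies $\exists\,\bar x.\,\varphi$. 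Then every structure of $\Ca$ satisfies $\neg\exists\,\bar x.\,\varphi$, so this sentence belongs to $Ax$; but $\Ma\in\Ta_{\Ca}$ then forces $\Ma\models\neg\exists\,\bar x.\,\varphi$, contradicting the choice of $\Ma$. Hence some structure of $\Ca$ satisfies $\exists\,\bar x.\,\varphi$, and the reduction above yields an interpretation with variable-free part in $\Ca$ satisfying $\varphi$.

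There is no substantial obstacle in this proof; the only point requiring care is the reduction to the existential closure, which is valid precisely because a formula has finitely many free variables. Conceptually, the statement simply records that $Ax$ was defined to be exactly the set of sentences true throughout $\Ca$, so $\Ca$ and $\Ta_{\Ca}$ satisfy the same sentences; in particular they agree on the existential closures of formulas, and satisfiability of $\varphi$ is equivalent to satisfiability of its existential closure.
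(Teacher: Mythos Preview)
Your proof is correct and follows essentially the same approach as the paper: both directions are handled identically, with the nontrivial left-to-right direction established by passing to the existential closure $\exists\,\bar x.\,\varphi$ and observing that if no structure of $\Ca$ satisfies it then $\neg\exists\,\bar x.\,\varphi\in Ax$, yielding a contradiction. The only cosmetic difference is that the paper phrases the left-to-right argument as a contrapositive followed by a reductio, whereas you assume $\Ta_{\Ca}$-satisfiability directly before deriving the contradiction; the underlying reasoning is the same.
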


\begin{proof}
Every interpretation whose variable-free part is in $\Ca$ is
a $\Ta_{\Ca}$-interpretation,
and so the right-to-left direction
trivially holds.
Now, suppose $\varphi$ is not satisfied by any $\Sigma$-interpretation
whose variable-free part is in $\Ca$.
Then its existential closure $\exists\overline{x}.\varphi$ is
not satisfied by any structure of $\Ca$, and hence
$\neg\exists\overline{x}.\varphi\in Ax$.
\emph{Ad absurdum}, suppose that $\varphi$ is $\Ta_{\Ca}$-satisfiable.
Then there is a $\Ta_{\Ca}$-interpretation $\Aa$
such that $\Aa\models\varphi$.
In particular, $\Aa\models \exists \overline{x}.\varphi$.
But since $\Aa$ is a $\Ta_{\Ca}$-interpretation, we must also have
$\Aa\models\neg\exists\overline{x}.\varphi$, which is a contradiction.
\end{proof}

\subsection{Proof of \Cref{lem:ta23smooth}}

Let $\phi$ be a quantifier-free $\Sigma_{2}$-formula,
$\Aa$ a $\Ta_{2,3}$-interpretation that satisfies $\phi$ and
$\kappa$ a function from $\set{\sigma_{1},\sigma_{2}}$ to the class
of cardinals such that
$\kappa(\sigma_{1})\geq\card{\sigma_{1}^{\Aa}}$ and
$\kappa(\sigma_{2})\geq\card{\sigma_{2}^{\Aa}}$.
We construct a $\Sigma_{2}$-interpretation $\Aa'$ as follows.
For $i\in\set{1,2}$, we let
$\sigma_{i}^{\Aa'}:=\sigma_{i}^{\Aa}\uplus B$ for some
set $\Ba$ of countable cardinality if
$\kappa(\sigma_{i})$ is infinite
or of cardinality $\kappa(\sigma_{i})-\card{\sigma_{i}}^{\Aa}$ otherwise.
Notice that this is well defined because
$\kappa(\sigma_{i})\geq\card{\sigma_{i}^{\Aa}}$.
As for variables, $x^{\A'}:=x^{\Aa}$ for each variable in
$\fv{}{\phi}$. 
This is well defined because the domains of $\sigma_{1}$ and $\sigma_{2}$ 
were only possibly extended, not reduced.
First, we prove that $\Aa'$ is a $\Ta_{2,3}$-interpretation.
If $\kappa(\sigma_{1})=2$, then since 
$\kappa(\sigma_{1})\geq\card{\sigma_{1}^{\Aa}}$, we must have
that $\card{\sigma_{1}^{\Aa}}=2$, which means that
$\card{\sigma_{2}}^{\Aa}$ is infinite, which in turn means
that $\kappa(\sigma_{2})$ is infinite as well.
Hence in this case we have
$\card{\sigma_{1}^{\Aa'}}=\kappa(\sigma_{1})=2$ 
and
$\card{\sigma_{2}^{\Aa'}}=\kappa(\sigma_{2})=\infty$.
Otherwise, $\kappa(\sigma_{1})\geq 3$, and hence 
$\card{\sigma_{1}^{\Aa'}}=\kappa(\sigma_{1})\geq 3$ and also
$\card{\sigma_{2}^{\Aa'}}=\kappa(\sigma_{2})\geq\card{\sigma_{2}^{\Aa}}\geq 3$.
Clearly, $\Aa'$ satisfies $\phi$ as the interpretations of variables did not change.
Finally,
$\card{\sigma_{1}^{\Aa'}}=\kappa(\sigma_{1})$ 
and
$\card{\sigma_{2}^{\Aa'}}=\kappa(\sigma_{2})$ by construction.

\qed

\subsection{Proof of \Cref{lem:ta23fw}}

Define a function $\wit$ by 
$\wit(phi):=phi\wedge x_1=x_1\w x_2=x_2\w x_3=x_3\w  y_1=y_1\w y_2=y_2\w y_3=y_3$ for
fresh variables $x_1$, $x_2$ and $x_3$ of sort $\sigma_{1}$ and
$y_1$, $y_2$ and $y_3$ of sort $\sigma_{2}$.
We prove that $\wit$ is a witness for $\Ta_{2,3}$ w.r.t.~$\set{\sigma_1,\sigma_2}$.
 $\phi$ and $\exists x_1,x_2,x_3,y_1,y_2,y_3. \wit(\phi)$ are trivially 
logically equivalent and in particular $\Ta_{2,3}$-equivalent.
 We prove that $\wit(\phi)$ is finitely witnessed for $\Ta_{2,3}$ w.r.t.
$\set{\sigma_{1},\sigma_{2}}$.
Suppose that $\wit(\phi)$ is $\Ta_{2,3}$-satisfiable and let
$\Aa$ be a satisfying $\Ta_{2,3}$-interpretation.
Define a $\Sigma_{2}$-interpretation $\Ba$ simply by
$\sigma_{1}^{\Ba}=\fv{\sigma_{1}}{\phi}^{\Aa}\uplus\set{a_1,a_2,a_3}$ and
$\sigma_{2}^{\Ba}=\fv{\sigma_{2}}{\phi}^{\Aa}\uplus\set{b_1,b_2,b_3}$ 
for $a_1,a_2,a_3\notin\sigma_{1}^{\Aa}$
and
$b_1,b_2,b_3\notin\sigma_{2}^{\Aa}$.
The interpretations of variables from $\phi$ are the same as in $\Aa$.
As for the fresh variables $x_i^{\Ba}:=a_i$ and $y_i^{\Ba}:=b_i$
for $i\in\set{1,2,3}$.
We prove that $\Ba$ finitely witnesses $\wit(\phi)$ for $\Ta_{2,3}$ w.r.t.
$\set{\sigma_1,\sigma_2}$.
First, $\Ba$ is a $\Ta_{2,3}$-interpretation, as by construction
$\card{\sigma_{1}^{\Ba}},\card{\sigma_{2}^{\Ba}}\geq 3$.
Second, $\Ba\models\phi$ as the interpretations of variables from $\phi$
did not change, and trivially satisfies the new identities, and so
$\Ba\models \wit(\phi)$.
Third, by construction
$\sigma_{1}^{\Ba}=\fv{\sigma_{1}}{\phi}^{\Aa}\uplus\set{a_1,a_2,a_3}=
\fv{\sigma_{1}}{\phi}^{\Ba}\uplus\set{x_1^{\Ba},x_2^{\Ba},x_3^{\Ba}}=
\fv{\sigma_{1}}{\wit(\phi)}^{\Ba}
$, and similarly for $\sigma_{2}$.

\qed

\subsection{Proof of \Cref{lem:23nonfw}}
Let $\wit$ be a witness for $\Ta_{2,3}$ w.r.t.~$\set{\sigma_1,\sigma_2}$.
We show that it is not strong.
In particular, we show that $\wit(v=v)$ is not strongly finitely witnessed
for $\Ta_{2,3}$ w.r.t.~$\set{\sigma_1,\sigma_2}$.
Consider a $\Ta_{2,3}$-interpretation $\Aa$
with $\card{\sigma_{1}^{\Aa}}=2$ and $\card{\sigma_{2}^{\Aa}}=\aleph_0$.
Clearly, $\Aa\models v=v$, and so
$\Aa\models \exists\,\overline{w}.\: \wit(v=v)$, with $\overline{w}$ being
the variables in $\wit(v=v)$ other than $v$.
This in turn means that there is a $\Ta_{2,3}$-interpretation 
$\Aa'$ that satisfies $\wit(v=v)$, different from $\Aa$ only
in the interpretations of $\overline{w}$, if anywhere.
Let $\delta$ be the arrangement over $\fv{}{\wit(v=v)}$ induced by
$\Aa'$.
Then, $\delta$ either asserts that all variables in  $\fv{\sigma_{1}}{\wit(v=v)}$ 
are identical, or it partitions them into two equivalence classes.
$\Aa'\models \wit(v=v)\w \delta$, and so
$\wit(v=v)\w \delta$ is $\Ta_{2,3}$-satisfiable.
We show that it  does not have a finite witness for $\Ta_{2,3}$ w.r.t.~$S$.
Suppose for contradiction that $\Ba$ is a finite 
witness of $\wit(v=v)\w\delta$ for $\Ta_{2,3}$ w.r.t.~$S$.
Then $\card{\sigma_{1}^{\Ba}}=\card{\fv{\sigma_{1}}{\wit(v=v)\w\delta}^{\Ba}}$.
Now, $\Ba\models\delta$ and $\Ba$ is a $\Ta_{2,3}$-interpretation, meaning $\card{\sigma_1^{\Ba}}\ge 2$, so if $\delta$ requires all variables of sort $\sigma_1$ to be equal, we already have a contradiction.  On the other hand, if $\delta$ partitions the variables into two equivalence classes,
we get that $\card{\sigma_{1}^{\Ba}}=2$.
But since $\Ba$ finitely witnesses $\wit(v=v)\w\delta$ for $\Ta_{2,3}$ w.r.t.
$\set{\sigma_1,\sigma_2}$,
we also get that $\sigma_2^{\Ba}$ is finite, meaning
$\Ba$ is not a $\Ta_{2,3}$-interpretation.
\qed

\subsection{Proof of \Cref{lem:smoothatleast}}
Let $\Aa$ be the $\Ta$-structure with a minimal number of elements,
and let $n=\card{\sigma^{\Aa}}$.
To show that every $\Sigma_0$-structure that satisfies $\psi_{\geq n}^{\sigma}$ belongs to $\Ta$,
let $\Ba$ be a $\Sigma_0$-structure that
satisfies $\psi_{\geq n}^{\sigma}$ and let
$m$ be the cardinality of $\sigma^{\Ba}$.
Then $m\geq n$. 
Clearly, $\Aa\models x=x$ and has $n$ elements.
Since $\Ta$ is smooth w.r.t.~$\sigma$,
there exists a $\Ta$-interpretation (that satisfies $x=x$)
with cardinality $m$. This interpretation must be $\Ba$,
as the lack of any symbols means that the only thing that
distinguishes between $\Sigma_0$-structures is their cardinality (modulo isomorphism).
For the converse, note that by the choice of $n$ as minimal,
every $\Ta$-structure satisfies $\psi_{\geq n}^{\sigma}$.
\qed

\subsection{Proof of \Cref{prop:singlesort}}
$x=x$ is clearly $\Ta$-satisfiable.
Since $\Ta$ is finitely witnessable (say with witness $\wit$),
there is a $\Ta$-interpretation $\Aa$ that satisfies
$\wit(x=x)$ such that $\sigma^{\Aa}$ is finite.
$\Ta$ is smooth, and hence, by \Cref{lem:smoothatleast},
it is axiomatized by $\psi_{\geq n}^{\sigma}$ for some $n$.
Define $\wit'(\phi):=\phi \wedge \distinct(x_1\til x_n)$ for fresh
$x_1\til x_n$.
Since $\Ta$ is axiomatized by $\psi_{\geq n}^{\sigma}$,
$\phi$ is $\Ta$-equivalent to $\exists \overline{x}.\wit'(\phi)$.
Further, for any arrangement $\delta$ over some set of variables,
and any $\Ta$-interpretation $\Aa'$ that satisfies $\wit'(\phi)\wedge\delta$, if the domain of $\Aa'$ is reduced to contain only the elements in $\fv{}{\wit'(\phi)\wedge\delta}^{\Aa'}$, the result is still a $\Ta$-interpretation since $\wit'(\phi)$ contains
$\distinct(x_1\til x_n)$.
We therefore get that $\wit'$ is a strong witness for $\Ta$ w.r.t.~$\sigma$.
\qed

\subsection{Proof of \Cref{lem:notsmooth}}
Let $\phi$ be $x=x$ and
$\Aa$ be a $\Sigma$-interpretation with $\sigma^{\Aa}=\set{1,2}$
and $x^{\Aa}=1$.
Then $\Aa$ is a $\TaEven$-interpretation that satisfies $\phi$.
Let $\kappa$ defined by $\kappa(s)=3$.
Then $3=\kappa(s)\geq\card{\sigma^{\Aa}}=2$.
However, there is no $\Sigma$-interpretation $\Aa'$
with $\card{\sigma^{\Aa'}}=3$.

\qed

\subsection{Proof of \Cref{thm:weakeven}}
Define $\wit(\phi)$ as follows.
Let
$E$ be the set of all equivalence relations over 
$\fv{}{\phi}\cup\{w\}$ for some fresh variable $w$.  
Let $\even(E)$ be the set of all equivalence relations in $E$ for which the number of equivalence classes is even.
Then, $\wit(\phi)$ is
$\phi\wedge\bigvee_{e\in \even(E)}\delta_e$, where for an equivalence relation
$e\in \even(E)$, $\delta_e$ is the arrangement induced by $e$:
\begin{equation*}
\bigwedge_{(x,y)\in e}x=y\wedge\bigwedge_{x,y\in
\fv{}{\phi}\cup\set{w}\wedge(x,y)\not\in e}x\neq y
\end{equation*}
We prove that $\wit$ is a witness.
Let $\phi$ be a $\Sigma$-formula.
We first prove that it is $\TaEven$-equivalent to 
$\exists\,w.\:\wit(\phi)$.
Since $\phi$ is a conjunct of $\wit(\phi)$ that does
not include $w$,
every $\Aa$-interpretation that satisfies $\wit(\phi)$
also satisfies $\phi$. For the other direction, 
let $\Aa$ be a $\TaEven$-interpretation satisfying $\phi$.
Even though $\Aa$ may have infinitely many elements, the number of
elements in 
$\fv{}{\phi}^{\Aa}$ must be finite.
If the number of elements in $\fv{}{\phi}^{\Aa}$ is even,
then let $a$ be some arbitrary element of $\fv{}{\phi}^{\Aa}$.
Otherwise, let $a$ be an element in $\Aa$ different from all the elements in $\fv{}{\phi}^{\Aa}$ (there must be such an element since $\Aa$ has an even or infinite number of elements).  In either case, the number of elements in $(\fv{}{\phi}\cup\{w\})^{\Aa}$ is even.  Thus, if we modify $\Aa$ to map $w$ to $a$, then it must satisfy one of the disjuncts in $\wit(\phi)$.  Hence, $\Aa$ satisfies $\exists\,w.\:\wit(\phi)$.

Next, if $\wit(\phi)$ is $\TaEven$-satisfiable,
then there is a satisfying $\TaEven$-\hspace{0pt}interpretation $\Aa$ satisfying it.
$\Aa$ must satisfy one of the disjuncts in $\wit(\phi)$, which means
$\card{\fv{}{\wit(\phi)}^{\Aa}}$ is even. 
The restriction of $\Aa$ to $\fv{}{\wit(\phi)}^{\Aa}$ is a 
$\TaEven$-\hspace{0pt}interpretation that finitely witnesses $\wit(\phi)$.
\qed

\subsection{Proof of \Cref{thm:weaknotstrong}}
Let $\wit: \qf(\Sigma_0)\ra\qf(\Sigma_0)$.
We prove that $\wit$ is not a strong witness for $\TaEven$
w.r.t.~$\sigma$,
by showing that
$\wit(x=x)$ is not strongly finitely witnessed 
for $\TaEven$ w.r.t.~$\sigma$.
Consider a $\TaEven$-interpretation $\Aa$ with 2 elements,
which interprets all the variables in
$\fv{}{\wit(x=x)}$.
Clearly, $\Aa\models x=x$, and therefore, 
$\Aa\models \exists\,\overline{w}.\:\wit(x=x)$, where $\overline{w}$
is
$\fv{}{\wit(x=x)}\setminus{\set{x}}$.
Hence, there exists a $\TaEven$-interpretation $\Aa'$,
identical to $\Aa$, except possibly in its interpretation
of variables in $\fv{}{\wit(x=x)}\setminus{\set{x}}$, that satisfies
$\wit(x=x)$.
In particular, $\Aa'$ has two elements.
Let $\delta_{\Aa'}$ be the arrangement over
$\fv{}{\wit(x=x)}$ satisfied by $\Aa'$.
Then $\delta_{\Aa'}$ induces an equivalence relation with either 1 or 2 equivalence classes.
Let $v$ be a variable not in $\fv{}{\wit(x=x)}$.
Define an arrangement $\delta$ over $\fv{}{\wit(x=x)}\cup \set{v}$ as follows:
If $\delta_{\Aa'}$ induces one equivalence class, 
$\delta:=\delta_{\Aa'}\wedge \bigwedge_{u\in\fv{}{\wit(x=x)}}v=u$.
Otherwise,
$\delta:=\delta_{\Aa'}\wedge \bigwedge_{u\in\fv{}{\wit(x=x)}}v\neq u$.
In the first case, $\delta$ induces one equivalence class, and in the second, 
three.
$\wit(x=x)\wedge\delta$ does not have a finite witness for $\TaEven$
w.r.t.~$\sigma$, as any interpretation $\Ba$
that finitely witnesses it has either 1 or 3 elements, and hence it
is not in $\TaEven$.

\subsection{Proof of \Cref{item:fwpolite}}

$\Ta_2$ is smooth w.r.t.~$\ssi\cup \snsi$.
In particular, it is smooth w.r.t.~$\snsi$.
 We show that 
it is also strongly stably infinite w.r.t.
$(\ssi,\snsi)$, and then the result follows 
from case 1 of \Cref{yoniafterclark}.
Let $\phi$ be a $\Sigma$-formula,
$\Aa$ a
$\Ta$-interpretation that satisfies $\phi$.
Define $\kappa(\sigma)$ to be $\aleph_0$ for every
$\sigma\in \ssi$ such that $\sigma^{\Aa}$ is finite,
$\kappa(\sigma)=\card{\sigma^{\Aa}}$ for every
$\sigma\in \ssi$ such that $\sigma^{\Aa}$ is infinite,
and $\kappa(\sigma)=\card{\sigma^{\Aa}}$ for
every $\sigma\in \snsi$.
Since $\Ta$ is smooth w.r.t.~$\ssi\cup \snsi$,
there exists a $\Ta$-interpretation $\Ba$ that satisfies
$\phi$ with $\card{\sigma^{\Ba}}=\kappa(\sigma)$
for every $\sigma\in \ssi$ 
and $\card{\sigma^{\Ba}}=\kappa(\sigma)=\card{\sigma^{\Aa}}$
for every $\sigma\in \snsi$.

\qed

%

\subsection{Proof of \Cref{lem:many-sorted-sl}}

Let $Ax$ be the set of sentences that are satisfied by every $\Ta$-structure.
Define the following sets, based on formulas that are defined in \Cref{fig:cardfor}:
\begin{gather*}
\finite_{\Aa}:=\set{\psi_{=\card{\sigma^{\Aa}}}^{\sigma}\mid \sigma\in S_{\Aa}^{\finite}} \\
\infinite_{\Aa}:=\set{\neg\psi_{=n}^{\sigma}\mid \sigma\in S_{\Aa}^{\infinite},n\in\mathbb{N}} \\
A:=Ax\cup \finite_{\Aa}\cup \infinite_{\Aa}\cup\set{\phi} \\
\end{gather*}
Clearly, $\Aa\models A$.
By \Cref{thm:ordered-sl},
there exists a $\Sigma$-interpretation $\Ba$
that satisfies $A$ in which $\sigma^{\Ba}$ is countable
whenever it is infinite, for every $\sigma\in\sorts{\Sigma}$.
This in particular holds for every $\sigma\in S_{\Aa}^{\infinite}$.
Now let $\sigma\in S_{\Aa}^{\finite}$, then since $\Ba\models \finite_{\Aa}$,
$\card{\sigma^{\Ba}}=\card{\sigma^{\Aa}}$.
Finally, $\Ba\models\phi$ and it is a $\Ta$-interpretation.

\qed

\subsection{Remaining Cases in The Proof of \Cref{lem:same-card}}
Let $\psi_2:=\wit(\Gamma_2)$.
Since $\Ta_1$ is stably infinite w.r.t.~$\ssi$, there is a $\Ta_1$-interpretation $\Aa$ 
satisfying $\Gamma_1\cup\delta_{V}$ in which
$\sigma^{\Aa}$ is infinite for each $\sigma\in \ssi$.
By \Cref{thm:ordered-sl}, we may assume that $\sigma^{\Aa}$ is countable
for each $\sigma\in \ssi$.

\begin{itemize}
\item[Case 2]: Suppose $\Ta_2$ is stably infinite w.r.t
$(\ssi, \snsi)$,
smooth w.r.t.~$(\snsi,\ssi)$,
and strongly finitely witnessable w.r.t.~$\snsi$.
Then, there exists a $\Ta_2$-\hspace{0pt}interpretation $\Ba$
that satisfies $\psi_{2}\cup\delta_{V}$ such that
$\sigma^{\Ba}=V_{\sigma}^{\Ba}$ for every $\sigma\in \snsi$.
Since $\Aa$ and $\Ba$ satisfy $\delta_{V}$,
we have that for every $\sigma\in \snsi$,
$\card{\sigma^{\Ba}}=\card{V_{\sigma}^{\Ba}}=\card{V_{\sigma}^{\Aa}}\leq \card{\sigma^{\Aa}}$.
$\Ta_2$ is stably infinite w.r.t.~$(\ssi,\snsi)$,
and so there exists
a $\Ta_2$-interpretation $\Ba'$
that satisfies $\psi_2\cup \delta_{V}$
such that $\sigma^{\Ba'}$ is infinite for every
$\sigma\in \ssi$
and $\card{\sigma^{\Ba'}}\leq\card{\sigma^{\Ba}}\leq \card{\sigma^{\Aa}}$ for every $\sigma\in \snsi$.
$\Ta_2$ is smooth w.r.t.~$(\snsi,\ssi)$
and so there is a $\Ta_2$-interpretation $\Ba''$
satisfying $\psi_2\cup \delta_{V}$
such that
$\card{\sigma^{\Ba''}}=\card{\sigma^{\Aa}}$ for 
every $\sigma\in \snsi$
and $\card{\sigma^{\Ba''}}$ is infinite for every
$\sigma\in \ssi$.
Using \cref{lem:many-sorted-sl},
we may assume $\sigma^{\Ba''}$ is countable
and hence $\card{\sigma^{\Ba''}}=\card{\sigma^{\Aa}}$ for every
$\sigma\in S$.

\item[Case 3]: Suppose $\Ta_2$ is
stably infinite w.r.t.~$\ssi$,
smooth w.r.t.~$(\snsi, \ssi)$, and
strongly finitely witnessable w.r.t.~$(\snsi, \ssi)$.
Since it is stably infinite w.r.t.~$\ssi$,
there exists a $\Ta_2$-interpretation $\Ba$
that satisfies $\psi_{2}\cup \delta_{V}$ such that
$\sigma^{\Ba}$ is infinite for every $\sigma\in \ssi$.
$\Ta_2$ is strongly finitely-witnessable w.r.t.~$(\snsi,\ssi)$,
and hence there exists a $\Ta_2$-interpretation $\Ba'$
that satisfies $\psi_{2}\cup\delta_{V}$ such that
$\sigma^{\Ba'}=V_{\sigma}^{\Ba'}$ for every $\sigma\in \snsi$
and $\card{\sigma^{\Ba'}}$ is infinite for every
$\sigma\in \ssi$.
Since $\Aa$ and $\Ba'$ satisfy $\delta_{V}$,
we have that for every $\sigma\in \snsi$,
$\card{\sigma^{\Ba'}}=\card{V_{\sigma}^{\Ba'}}=\card{V_{\sigma}^{\Aa}}\leq \card{\sigma^{\Aa}}$.
$\Ta_2$ is smooth w.r.t.~$(\snsi,\ssi)$,
and so there exists
a $\Ta_2$-interpretation
$\Ba''$ that satisfies $\psi_2 \cup \delta_{V}$
such that $\card{\sigma^{\Ba''}}=\card{\sigma^{\Aa}}$ for every
$\sigma\in \snsi$ and
$\card{\sigma^{\Ba''}}$ is infinite for every
$\sigma\in \ssi$.
By \Cref{lem:many-sorted-sl},
we may assume that $\sigma^{\Ba''}$ is countable
for every $\sigma\in \ssi$, with the same cardinalities for sorts of $\snsi$,
and so we have $\card{\sigma^{\Ba''}}=\card{\sigma^{\Aa}}$
also for every $\sigma\in S$.

\end{itemize}	
\end{report}
\end{document}